\newcommand{\recht}[1]{\operatorname{#1}}
\newcommand{\tOR}{\mathtt{OR}}
\newcommand{\tAND}{\mathtt{AND}}
\newcommand{\tBAS}{\mathtt{BAS}}
\newcommand{\tBCF}{\mathtt{BCF}}
\newcommand{\struc}[1]{\recht{S}_{#1}}
\newcommand{\ch}{\recht{ch}}
\newcommand{\R}[1]{\recht{R}_{#1}}
\newcommand{\PF}{\recht{PF}}
\newcommand{\SCPF}{\recht{SCPF}}
\newcommand{\tzero}{\mathtt{0}}
\newcommand{\tone}{\mathtt{1}}
\newcommand{\BB}{\mathbb{B}}
\newcommand{\prob}{\mathbb{P}}
\newcommand{\vvec}[1]{\left(\begin{smallmatrix}#1\end{smallmatrix}\right)}
\newcommand{\me}{\tau_{\recht{m}}}
\newcommand{\ee}{\tau_{\recht{e}}}
\newcommand{\hlbox}[1]{%
  \smallskip\begin{center}
  \fboxrule1pt\fboxsep3pt\fcolorbox{black!45}{black!8}{%
  \begin{minipage}{.96\linewidth}#1\end{minipage}}
  \end{center}\smallskip}
\begin{document}
\title{Quantitative analysis of attack-fault trees via Markov decision processes\thanks{This research has been partially funded by ERC Consolidator grant 864075 CAESAR and the European Union’s Horizon 2020 research and innovation programme under the Marie Skłodowska-Curie grant agreement No. 101008233.}}
%
%\titlerunning{Abbreviated paper title}
% If the paper title is too long for the running head, you can set
% an abbreviated paper title here
%
\author{Milan Lopuhaä-Zwakenberg}
%\textbf{Max 16 pages excl. bib+app}
%
\authorrunning{Milan Lopuhaä-Zwakenberg}
% First names are abbreviated in the running head.
% If there are more than two authors, 'et al.' is used.
%
\institute{University of Twente\\
\email{m.a.lopuhaa@utwente.nl}}
\maketitle              % typeset the header of the contribution
\begin{abstract} Adequate risk assessment of safety critical systems needs to take both safety and security into account, as well as their interaction. A prominent methodology for modeling safety and security are attack-fault trees (AFTs), which combine the well-established fault tree and attack tree methodologies for safety and security, respectively. AFTs can be used for quantitative analysis as well, capturing the interplay between safety and security metrics. However, existing approaches are based on modeling the AFT as a priced-timed automaton. This allows for a wide range of analyses, but Pareto analsis is still lacking, and analyses that exist are computationally expensive. In this paper, we combine safety and security analysis techniques to introduce a novel method to find the Pareto front between the metrics \emph{reliability} (safety) and \emph{attack cost} (security) using Markov decision processes. This gives us the full interplay between safety and security while being considerably more lightweight and faster than the automaton approach. We validate our approach on a case study of cyberattacks on an oil pipe line.

\keywords{Attack-fault trees  \and Quantitative analysis \and Markov decision processes \and Reliability \and Security}
\end{abstract}
\section{Introduction}

\noindent \textbf{Safety-security co-analysis.} As high-tech systems become increasingly advanced, so too does their potential for vulnerabilities, either from accidental failure (safety) or malicious attackers (security). In interconnected systems, safety and security become more and more intertwined: for instance, a ransomware attack on a hospital can lead to the unavailability of essential systems, resulting in the loss of patient lives \cite{argaw2019state}. A comprehensive risk analysis that includes both safety and security is essential to guarantee a system's continued functioning.

\noindent \textbf{Attack-fault trees.} A common model for safety-security analysis is the \emph{attack-fault tree} (AFT) \cite{kumar2017quantitative}, which combines the well-established  models of fault trees (FTs) for safety \cite{ruijters2015fault} with attack trees (ATs) for security \cite{schneier1999attack}. An AFT is a directed acyclic graph whose root represents system failure. Its leaves denote basic events, which are either \emph{basic attack steps} (BASs) or \emph{basic component failures} (BCFs), depending on whether they represent attacker actions or accidental failures. The intermediate nodes are AND/OR-gates, that are compromised when all (resp. one) of their children are. An example is presented in Fig.~\ref{fig:bank}.

\begin{wrapfigure}[21]{r}{5cm}
\vspace{-2.2em}
    \centering
\includegraphics[width=5cm]{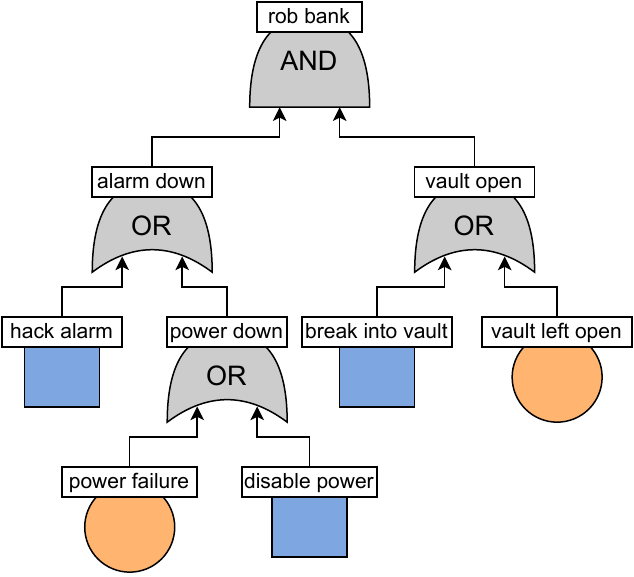}
\vspace{-2.2em}
\caption{An example of an AFT. In order to rob a bank, an attacker needs to both a disabled alarm and an open vault. The former can be accomplished by either hacking the alarm or by a power outage, which can be both accidental, or the result of an attack. The open vault can be a result of cracking the vault, or the vault can be left open by accident.}
 \label{fig:bank}
\end{wrapfigure}

\noindent \textbf{Quantitative analysis.} AFTs can be used for qualitative analysis, cataloguing which combinations of attacks and failure lead to system compromise. However, a more detailed vulnerability assessment is obtained by quantitative analysis, in which each BAS/BCF is assigned parameter values, from which risk metrics of the system can be deduced. For instance, in \cite{fovino2009integrating}, each BAS/BCF is assigned a probability value, which together determine the system's failure probability. However, safety and security often have different metrics, and conflating these into a single metric does not take the differences between safety and security, and their modeling formalisms \cite{budde2021attack}, into account.

AFTs can also be modeled as timed automata \cite{kumar2017quantitative,andre2019parametric}; these can then be given a wide number of parameters, such as attacker cost, failure rate, detection probability, etc., upon which statistical model checking tools can calculate a range of metrics. This method is very flexible and allows for thorough analysis, but it has two downsides: First, the automata models are large and intricate, so state space becomes unfeasibly large for larger AFTs \cite{volk2017fast}. Second, in these works metrics are not directly defined, but arise as a consequence of the automaton model. This makes it hard to judge to what extent the computed metric actually corresponds to the system's vulnerability. For instance, the automaton model of \cite{kumar2015quantitative} can be argued to not compute the actual minimal attack time of an attack tree, but just a lower bound \cite{lopuhaa2023attack}. Thus there is a need of quantitative AFT analysis methods that are lightweight, with clear, mathematically defined metrics, that take the dual safety-security nature into account.

\noindent \textbf{Cost-probability Pareto front.} In this paper we consider the setting where every BCF is assigned a failure probability and each BAS is assigned a cost to the attacker, which can represent monetary cost, resources spent, time, etc. The attacker chooses which BASs to activate such that the top level event is compromised with a high probability, while keeping their total costs low. This situation is a straightforward combination of FT unreliability analysis \cite{ruijters2015fault} and AT metric analysis \cite{lopuhaa2022efficient}. However, the situation is considerably more complicated than the separate safety and security analyses. First, the attacker's two goals are often conflicting: in order to ensure a higher compromise probability, they will need to spend additional costs. Therefore, there is not one optimal attack, but rather an entire Pareto front, and the goal of the analysis is to find all Pareto-optimal attacker strategies.  Existing AFT analyses do not compute the Pareto front, instead analyzing individual strategies for multiple metrics 
\cite{kumar2017quantitative,andre2019parametric}.

Second, it is important to consider to what extent the attacker can observe the failure of BCFs before deciding on an action. All BASs may occur first, or all BCFs, but more complicated relations are possible. In the last two cases, an attacker's strategy is not simply a set of activated BASs as in standard AT analysis; instead, it gives an action for each possible observed set of failed BCFs.

\noindent \textbf{Contributions.} This paper's main contribution is to give a fast algorithm for Pareto analysis on attack-fault trees. This goal is achieved as follows.
First, we give a formal definition of cost-probability Pareto front problems that includes the timing relation between BASs and BCFs. These can be generalized beyond AFTs to arbitrary Boolean functions. Since attacker strategies depend on the failure of observed BCFs, which are random events, the total cost of a strategy is a random variable itself. We consider two cases: the attacker either aims to minimize the maximal cost (for when an attacker only attempts one attack, with a set maximum budget), or to minimize the expected cost (for when an attacker attempts attacks on many similar systems, and wants to do so cost-effectively).

The AFT is analyzed by transforming it into a binary decision diagram (BDD). This technique is used in FT and AT analysis \cite{ruijters2015fault,lopuhaa2022efficient}, but the BDD of an AFT has a richer structure: it can be interpreted as a Markov Decision Process (MDP), which are widely studied in statistical model checking. However, our setting differs from standard MDP analysis: the resulting MDP is acyclic, we consider two metrics simultaneously, and maximum cost cannot be expressed as an expected value. Approaches to handle these differences exist \cite{bellman1957markovian,etessami2008multi,de2005model}, but they cannot be combined directly. In this paper, we give a new, straightforward solution to this problem, combining the ideas of the existing literature, and prove its correctness. This results in a fast algorithm, of which we show its validity by applying it to the safety-security analysis of an oil pipeline \cite{kumar2017quantitative}. All proofs are presented in the appendices.

\subsection{Related work}

To address safety and security, ATs and FTs have been combined in different ways \cite{nicoletti2023model}. In \cite{fovino2009integrating}, ATs are used to determine the outcome of some BCFs in FTs. Both BASs and BCFs are then assigned probabilities, which is used to calculate the reliability of the system. This methodology is straightforward to implement, but cannot model more complicated safety-security interdependencies. \cite{steiner2013combination} uses the same modeling, but assigns to the minimal cut sets both a probability (safety) and a security rating (low/medium/high), which allows for a basic Pareto analysis, although rigorous security metric analysis is missing.

Dynamic AFTs are described in \cite{kumar2015quantitative,andre2019parametric}: these do also include the sequential $\tAND$ and priority $\tAND$ gates from dynamic ATs/FTs. These AFTs are translated to timed automata, allowing for a rich analysis of many metrics. The downside of this approach is that the gates' behaviour is only informally described, making it difficult to determine to what extent the metric computed by the automata model corresponds with what one would expect \cite{lopuhaa2023attack}. Such a translation is also computationally expensive. Finally, no existing works performs a full Pareto analysis; for a detailed comparison between approaches see Section \ref{ssec:compare}.

Since there are multiple relevant security metrics, there are multiple works on finding Pareto fronts in ATs \cite{fila2019efficient,lopuhaa2022efficient,lopuhaa2023cost}. For FTs this is less relevant as one typically only considers reliability, or derived metrics such as \emph{mean time to fail}. In Markov decision processes multi-objective optimization can be achieved by multi-objective linear programming \cite{chatterjee2006markov,etessami2008multi}.

\section{Attack-fault tree semantics}

In this section, we define attack-fault trees as used in this paper. We take what is arguably the simplest model, where the only intermediate gates are $\tAND$- and $\tOR$-gates. Many extensions exist, including time-dependent gates \cite{kumar2017quantitative}, but our gate types are enough to define the interplay between costs and (static) probability.

\begin{definition}
An \emph{(augmented) attack-fault tree} (AFT) is a tuple $T = (V,E,\gamma)$ where $(V,E)$ is a rooted directed acyclic graph, and $\gamma\colon V \rightarrow \{\tOR,\tAND,\tBAS,\tBCF\}$ satisfies $\gamma(v) \in \{\tBCF,\tBAS\}$ if and only if $v$ is a leaf.
\end{definition}

For an AFT $T$ we furthermore define its set of BCFs $F_T = \{v \in V \mid \gamma(v) = \tBCF\}$ and its set of BASs $A_T = \{v \in V \mid \gamma(v) = \tBAS\}$. The root of $T$ is denoted $\R{T}$, and the children of a node $v$ form the set $\ch(v) = \{w \in V \mid (v,w) \in E\}$.

Denote the Boolean domain $\{\tzero,\tone\}$ by $\BB$. A \emph{safety-security event} is the simultaneous occurrence of some BCFs and BASs. We can model this as a pair $(x,y)$, where $x \in \mathcal{X}_T := \BB^{F_T}$ is a binary vector denoting which BCFs occur, and where $y \in \mathcal{Y}_T := \BB^{A_T}$ denotes which BASs occur. To what extent this leads to a compromised system depends on the AFT: an $\tOR$-gate is compromised if any of its children is compromised, while an $\tAND$-gate is compromised if the system as a whole is compromised. This is captured by the following definition.

\begin{definition}
Let $T = (V,E,\gamma)$ be an AFT. The \emph{structure function} of $T$ is the function $\struc{T}\colon V \times \mathcal{X}_T \times \mathcal{Y}_T \rightarrow \BB$ given by
\[
\struc{T}(v,x,y) = \begin{cases}
x_v, & \textrm{ if $\gamma(v) = \tBCF$,}\\
y_v, & \textrm{ if $\gamma(v) = \tBAS$,}\\
\bigvee_{w \in \ch(v)} \struc{T}(w,x,y) & \textrm{ if $\gamma(v) = \tOR$,}\\
\bigwedge_{w \in \ch(v)} \struc{T}(w,x,y) & \textrm{ if $\gamma(v) = \tAND$.}\\
\end{cases}
\]
$(x,y)$ \emph{compromises $v$} if $\struc{T}(v,x,y) = \tone$. It \emph{compromises $T$} if it compromises $\R{T}$.
\end{definition}

\section{Example of AFT analysis} \label{sec:example}

Before we fully dive into the mathematical framework of AFT analysis, we give a small example to showcase the subtleties involved. Consider the AFT of Fig.~\ref{fig:exa}. It consists of two components, each of which is only disabled if both a failure occurs (probability 0.5) and an attack succeeds (attacker cost 10). Disabling one component is enough to shut down the system; the attacker's goal is to disable the system with a probability that is as high as possible, spending as few resources as possible. As these are conflicting goals, the attacker's optimal choices are given by a Pareto front between cost and probability.

First, assume that the attacker needs to choose their attack before observing any failures. In this case, the attacker has three strategies. All of these are Pareto optimal, as each one has both a higher cost and a higher compromise probability than the previous one:

\begin{itemize}
\item Do not attack at all, for a cost of $0$ and a compromise probability of $0$.
\item Attack one of the components. This has a cost of $10$, and a resulting compromise probability of $0.5$.
\item Attack both components for a cost of $20$. Now the system is not compromised only if both components do not fail, so the compromise probability is $0.75$.
\end{itemize}

\begin{wrapfigure}[11]{r}{5cm}
    \vspace{-3.5em}
        \centering
    \includegraphics[width=4.5cm]{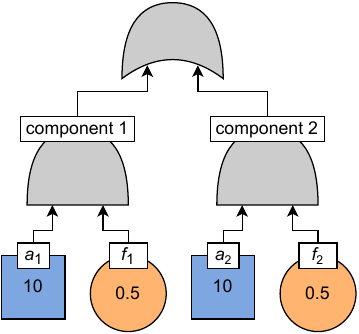}
    \vspace{-1.5em}
    \caption{The example AFT discussed in Section \ref{sec:example}.} \label{fig:exa}
    \end{wrapfigure}

Now consider the situation where the attacker gets to observe whether any failures occur. In this case, the attacker has many different strategies: there are $4$ possible failures $x \in \mathcal{X}_T$ that the attacker can observe, and to choose a strategy is to choose an attack $y \in \mathcal{Y}_T$ for each observed failure vector. As there are $4$ possible attacks, the number of strategies is $4^4 = 256$. Thus even in this small example, a full analysis of all strategies becomes quite cumbersome.

A further complication is that we must be precise about what we mean by the cost of a strategy, as the cost the attacker incurs depends on $x$, which is random. We can define cost in two ways:
First, we can take the maximum cost over all possible $x$. This is a useful metric in which the attacker performs a single attack on a system, and is constrained by a budget. The alternative is to consider the expected cost. This is useful when the attacker performs many different attacks on the same or similar systems, and wants to do so in a cost-efficient way. 

In general, these approaches lead to different optimal strategies. To find these, one calculates the success probability and maximum/expected cost of each and then finds the Pareto front. It turns out that the following strategies are optimal for both expected and maximum cost:
\begin{description}
\item[$\sigma^1$.] Do no attack regardless of $x$, for a maximum/expected cost of $0$ and a probability of $0$.
\item[$\sigma^2$.] perform attacks depending on $x$: if $x_1 = x_2 = \tzero$, do nothing. Otherwise, choose an $x_i$ equal to $\tone$ and perform attack $a_i$. This attack has a success probability of $0.75$, as it succeeds if there is a nonzero $f_i$. Its maximum cost is $10$, as we activate at most one $a_i$, and its expected cost is $7.5$.
\end{description}

For maximum cost, the two given attacks form our Pareto front. For expected cost, one can also consider mixed strategies, picking $\sigma^1$ with probability $p$ and picking $\sigma^2$ with probability $1-p$. This mixed strategy has success probability and cost both $0.75(1-p)$; each choice of $p$ is Pareto-optimal. For maximum cost mixed strategies do not help: for $0 < p < 1$ the mixed strategy above has maximum cost $10$, and probability $<0.75$, which is less efficient than pure $\sigma^2$.

The two scenarios discussed here (either all failures first, or all attacks first) are not the only possiblities. In general, in order to define a strategy, we need to know for each attack $a$ the set of $Q_a$ of basic events that occur before the attacker has to decide on $a$. In this section, we have considered $Q_{a_1} = Q_{a_2} = \varnothing$ and $Q_{a_1} = Q_{a_2} = \{f_1,f_2\}$, but these are not the only possibilities.

\section{Quantitative analysis}

In this section, we essentially define quantitative analysis of AFTs. However, a key insight is that these definitions do not depend on the AFT's graph structure, but only on the structure function. Therefore, instead of defining quantitative analysis for AFTs, we define it for general Boolean functions. 

\subsection{Strategies}

In this subsection we generalize the AFT setting of Section \ref{sec:example}, and their strategies.

\begin{definition}
A \emph{scenario} is a tuple $\Lambda = (F,A,\varphi,Q)$, where $F$ and $A$ are finite sets; $\varphi\colon \mathcal{X} \times \mathcal{Y} \rightarrow \BB$ is a Boolean function, where $\mathcal{X} = \BB^F$ and $\mathcal{Y} = \BB^A$; and $Q = (Q_a)_{a \in A}$ is a collection of subsets of $F$, such that the set $\{Q_a \mid a \in A\}$ is linearly ordered by $\subseteq$.
\end{definition}

In the context of AFTs, one takes $A = A_T$, $F  = F_T$, and $\varphi = \struc{T}(\R{T},\_,\_)$. For a BAS $a \in A_T$, the set $Q_a$ is the set of BCFs whose failure status is known to the attacker at the time of deciding whether to perform $a$. The last property allows for the possibility that $Q_a = Q_{a'}$ for $a,a' \in A$. The $Q_a$ are required to be linearly ordered by $\subseteq$ for temporal consistency: when the attacker can decide about $a$ later than they have to decide about $a'$, we have $Q_{a'} \subseteq Q_a$.

A strategy $\sigma$ denotes an attacker's decision whether or not to perform $a$, having observed the status of $Q_a$. This is captured by a map $\sigma_a\colon\BB^{Q_a} \rightarrow \BB$, where $\sigma_a(x) = \tone$ means that the attacker performs $\sigma$ upon observing $x$. This is captured by the following definition:

\begin{definition}
Let $\Lambda = (F,A,\varphi,Q)$ be a scenario. A \emph{strategy} for $\Lambda$ is a collection $\sigma = (\sigma_a)_{a \in A}$ where each $\sigma_a$ is a map $\sigma_a\colon \BB^{Q_a} \rightarrow \BB$. A strategy $\sigma$ induces a map $\hat{\sigma}\colon \mathcal{X} \rightarrow \mathcal{Y}$ given by $\hat{\sigma}(x)_a = \sigma_a((x_f)_{f \in Q_a}).$ The set of strategies for $\Lambda$ is denoted $\Sigma_{\Lambda}$.
\end{definition}

\begin{example} \label{ex:AFT1}
Consider the example of Section \ref{sec:example}, assuming the attacker observes failures before acting. Then $F = \{f_1,f_2\}$, $A = \{a_1,a_2\}$, $\varphi = (f_1 \wedge a_1) \vee (f_2 \wedge a_2)$ and $Q_{a_1} = Q_{a_2} = \{f_1,f_2\}$. Write $x \in \mathcal{X}$ as $x_{f_1}x_{f_2}$ and $y \in \mathcal{Y}$ as $y_{a_1}y_{a_2}$; then for the strategies $\sigma^1$ and $\sigma^2$ defined there, $\hat{\sigma}^1$ and $\hat{\sigma}^2$ are given by
\begin{align*}
\forall x\colon\hat{\sigma}^1(x) &= \tzero\tzero & \textrm{and} && \hat{\sigma}^2(\tzero\tzero) &= \tzero\tzero, & \hat{\sigma}^2(\tzero\tone) &= \tzero\tone, &\hat{\sigma}^2(\tone\tzero) = \hat{\sigma}^2(\tone\tone) &= \tone\tzero.
\end{align*}
\end{example}

Note that we have only defined strategies, and not how sensible a strategy is to an attacker. This comes in the next section, where we define metrics that define how good a strategy is.

\subsection{Strategy metrics}

In order to measure the effectiveness of strategies, we need to assign metrics to the variables $f \in F$, $a \in A$ of a scenario $(F,A,\varphi,Q)$. To each $f \in F$ we assign a failure probability $\pi_f \in [0,1]$, i.e., the probability that $f$ is compromised. Meanwhile, to each $a \in A$ we assign a cost $\gamma_a \in [0,\infty]$, denoting the cost the attacker needs to spend in order to compromise $a$.

\begin{definition}
A \emph{quantified scenario} is a tuple $\hat{\Lambda} = (\Lambda,\pi,\gamma)$, where $\Lambda = (F,A,\varphi,Q)$ is a scenario, the \emph{probability vector} $\pi$ is an element of $[0,1]^F$ and the \emph{cost vector} $\gamma$ is an element of $[0,\infty]^A$.
\end{definition}

There are many attack tree metrics considered beyond cost in the literature \cite{lopuhaa2022efficient}. We restrict our focus to cost for two reasons: first, many other metrics such as time or probability can be rephrased as costs. Second, taking a semiring-based approach that generalizes many metrics at once is difficult in our scenario, since the notion of expected value is not defined for general semirings; thus we would not be able to define the analog of expected cost.

In our model, attacks do not have a success probability: $a$ always succeeds when the attacker pays cost $\gamma_a$. This is not a restriction: if we want to model a system in which $a$ only succeeds with probability $\pi_a$, we can replace $a$ with $\tAND(a',f')$ with $\gamma_{a'} = \gamma_a$, $\pi_{f'} = \pi_a$, and $f' \notin Q_{a'}$. Thus our choice to only assign probabilities to BCFs does not limit our generality.

The probability and costs vectors allow us to assess a strategy's merit. A strategy's probability $\hat{\pi}(\sigma)$ is the probability that it leads to $\varphi = \tone$, under the assumption that each $f \in F$ is compromised with probability $\pi_f$; we assume these are independent events, as is standard in fault tree analysis \cite{ruijters2015fault}. Meanwhile, the total cost of a strategy is a random variable, depending on which $f \in F$ become compromised. Thus we consider both the maximum cost $\hat{\gamma}_{\recht{m}}(\sigma)$ that a strategy can incur over all possible failures in $F$, and the expected cost $\hat{\gamma}_{\recht{e}}(\sigma)$. In the definition below, $\mathsf{v}\sim \recht{Ber}(p)$ denotes a Bernoulli variable, i.e. $\mathsf{v} \in \BB$ and $\prob(\mathsf{v} = \tone) = p$ and $\prob(\mathsf{v} = \tzero) = 1-p$. 

\begin{definition}
Let $\hat{\Lambda}$ be a quantified scenario. For $\sigma \in \Sigma_{\Lambda}$, the \emph{probability} $\hat{\pi}(\sigma)$, \emph{maximal cost} $\hat{\gamma}_{\recht{m}}(\sigma)$ and \emph{expected cost} $\hat{\gamma}_{\recht{e}}(\sigma)$ are defined as follows. Let $\mathsf{x}$ be a random variable on $X$ with independent coefficients $\mathsf{x}_f \sim \recht{Ber}(\pi_f)$. Then
\begin{align*}
\hat{\pi}(\sigma) &= \prob_{\mathsf{x}}(\varphi(\mathsf{x},\hat{\sigma}(\mathsf{x})) = \tone), &
\hat{\gamma}_{\recht{m}}(\sigma) &= \max_{x \in X} \sum_{\substack{a \in A\colon\\ \hat{\sigma}(x)_a = \tone}} \gamma_a, &
\hat{\gamma}_{\recht{e}}(\sigma) &= \mathbb{E}_{\mathsf{x}}\sum_{\substack{a \in A\colon\\ \hat{\sigma}(\mathsf{x})_a = \tone}} \gamma_a.
\end{align*}
\end{definition}
\begin{example} \label{ex:AFT2}
We continue Example \ref{ex:AFT1}. We extend the $\Lambda$ defined there with the probabilities and costs of Section \ref{sec:example}: $\pi_{f_1} = \pi_{f_2} = 0.5$ and $\gamma_{a_1} = \gamma_{f_2} = 10$. Then $\mathsf{x}$ has four possibilities $\tzero\tzero,\tzero\tone,\tone\tzero,\tone\tone$. We calculate their probabilities: we have $\mathbb{P}(\mathsf{x} = \tone\tone) = \mathbb{P}(\mathsf{x}_{f_1} = \tone)\cdot \mathbb{P}(\mathsf{x}_{f_2} = \tone) = \pi_{f_1}\pi_{f_2} = 0.25$. In fact, all four outcomes have equal probability. Since $\varphi(x,\tzero\tzero) = \tzero$ for all $x$ and since $\sigma^1$ never activates any attacks, we have $\hat{\pi}(\sigma^1) = 0$. For the same reason we have $\hat{\gamma}_{\recht{m}}(\sigma^1) = \hat{\gamma}_{\recht{e}}(\sigma^1) = 0$. Meanwhile, from the definition of $\varphi$ and $\hat{\sigma}^2$ from Example \ref{ex:AFT1} we see that $\varphi(x,\hat{\sigma}^2(x)) = \tzero$ only if $x = \tzero\tzero$, which happens with probability 0.25; hence $\hat{\pi}(\sigma^2) = 0.75$. Furthermore, the cost of a strategy, upon observing $x$, is
\begin{align*}
\gamma_{a_1}\hat{\sigma}^2(x)_{a_1}+\gamma_{a_2}\hat{\sigma}^2(x)_{a_2} &= \begin{cases}
0, & \textrm{ if $x = \tzero\tzero$,}\\
10, & \textrm{ otherwise.}
\end{cases}
\end{align*}
Hence $\hat{\gamma}_{\recht{m}}(\sigma^2) = 10$ and $\hat{\gamma}_{\recht{e}}(\sigma^2) = 7.5$.
\end{example}

The attacker's aim is to choose $\sigma$ such that $\hat{\pi}(\sigma)$ is as large as possible, while keeping $\hat{\gamma}_{\recht{m/e}}(\sigma)$ as small as possible. These are conflicting concerns, and the solution to this problem is to find the Pareto front of optimal strategies. This is expressed as follows. Let $\mathcal{D} = [0,1] \times [0,\infty]$. We can define the following:

\begin{definition}
Let $\hat{\Lambda}$ be a quantified scenario. Define  $\me,\ee\colon \Sigma_{\Lambda} \rightarrow \mathcal{D}$ by
\begin{align*}
\me(\sigma) &= \vvec{\hat{\pi}(\sigma)\\\hat{\gamma}_{\recht{m}}(\sigma)},& \ee(\sigma) &= \vvec{\hat{\pi}(\sigma)\\\hat{\gamma}_{\recht{e}}(\sigma)}.
\end{align*}
\end{definition}

Let $\sqsubseteq$ be the partial order on $\mathcal{D}$ given by $\vvec{c\\p}\sqsubseteq \binom{c'}{p'}$ iff $c \leq c'$ and $p \geq p'$. In the maximal cost setting, the attacker's aim is to find strategies $\sigma$ for which $\me(\sigma)$ is minimal w.r.t. $\sqsubseteq$. This can be formalized by a Pareto front:

\begin{definition}
Let $I \subseteq \mathcal{D}$. The \emph{Pareto Front} of $I$ (with respect to $\sqsubseteq$) is defined as $\PF(I) = \{d \in I \mid \forall d' \in I: d' \not \sqsubset d\}$. For a quantified scenario $\hat{\Lambda}$, its \emph{probability vs. maximal cost Pareto front} $\recht{PMC}(\hat{\Lambda})$ is the Pareto front $\PF(\me(\Sigma_{\Lambda}))$.
\end{definition}

In the expected cost setting, the Pareto front is more involved. The set $\Sigma_{\Lambda}$ only consists of \emph{pure strategies}, i.e., strategies in which the attacker's action is deterministic given $\mathsf{x}$. The attacker can also perform mixed strategies, assigning a probability $\lambda_{\sigma}$ to each strategy $\sigma$. The probability that this mixed strategy  results in $\varphi = \tone$ is $\sum_{\sigma} \lambda_{\sigma} \hat{\pi}(\sigma)$, while the expected cost is $\sum_{\sigma} \lambda_{\sigma} \hat{\gamma}_{\recht{e}}(\sigma)$. In other words, when also considering mixed strategies, we do not just get $\me(\Sigma_{\Lambda})$, but its entire convex hull $\recht{Conv}(\me(\Sigma_{\Lambda}))$. This is a two-dimensional polyhedron, and its Pareto front consists of a finite number of vertices that represent pure strategies, and the line segments between them. Thus we only need to find the Pareto-optimal vertices; Formally, this can be defined as follows.

\begin{definition}
Let $I \subset \mathcal{D}$. For $d_1,d_2 \in I$, let $\ell(d_1,d_2)$ be the line segment connecting them. The \emph{strictly convex Pareto Front} of $I$ is defined as $\SCPF(I) = \{d \in I \mid \forall d_1,d_2 \in I, \forall d' \in \ell(d_1,d_2): d' \not \sqsubset d\}$. For a quantified scenario $\hat{\Lambda}$, its \emph{probability vs expected cost Pareto front} $\recht{PEC}(\hat{\Lambda})$ is the strictly convex Pareto front $\SCPF(\ee(\Sigma_{\Lambda}))$.
\end{definition}
The main goal of this paper is then to find fast solutions to the following:

\hlbox{{\bf Problem.}
Given a quantified scenario $\hat{\Lambda}$, calculate $\recht{PMC}(\hat{\Lambda})$ and $\recht{PEC}(\hat{\Lambda})$.}
This problem is NP-hard, as calculating FT failure probability is NP-hard \cite{lopuhaa2023fault}; hence it is important to find computationally efficient solutions.

\begin{example}
Suppose $\varphi$ is the structure function of an attack tree. Hence $F = \varnothing$ and $\varphi\colon \BB^A \rightarrow \BB$ is nondecreasing and nonconstant. In this case, each $\sigma_a$ is just an element of $\BB$, and we can consider $\sigma$ to be an element of $\BB^A = Y$. We get
\begin{align*}
\hat{\pi}(\sigma) &= \varphi(\sigma) \in \{0,1\}, &
\hat{\gamma}_{\recht{m}}(\sigma) = \hat{\gamma}_{\recht{e}}(\sigma) &= \sum_{a \in A} \sigma_a \gamma_a.
\end{align*}
In $\recht{PMC}(\hat{\Lambda})$ only two first coordinates are possible, so $|\recht{PEC}(\hat{\Lambda})| = 2$; for each value of $\hat{\pi}(\sigma)$ the Pareto front has one element, corresponding to the $\sigma$ that minimizes $\hat{\gamma}_{\recht{m}}(\sigma)$ given $\varphi(\sigma)$. For $\varphi(\sigma) = \tzero$ this is $\sigma = \vec{\tzero}$. For $\varphi(\sigma) = \tone$, the corresponding second coordinate is the minimal cost of a succesful attack as defined in \cite{lopuhaa2022efficient}. Furthermore $\recht{PEC}(\hat{\Lambda}) = \recht{PMC}(\hat{\Lambda})$.
\end{example}

\begin{example}
Suppose $\varphi$ is the structure function of a fault tree. Hence $A = \varnothing$ and $\varphi\colon \BB^F \rightarrow \BB$ is nondecreasing and nonconstant. In this case, $\Sigma_{\Lambda} = \{\varnothing\}$ as there is no choice of strategy. For this `strategy' one has
\begin{align*}
\hat{\pi}(\varnothing) &= \prob(\varphi(\vec{F}) = \tone), &
\hat{\gamma}_{\recht{m}}(\varnothing) = \hat{\gamma}_{\recht{e}}(\varnothing) &= 0.
\end{align*}
Thus $\recht{PMC}(\hat{\Lambda})$ and $\recht{PEC}(\hat{\Lambda})$ consist of a single element of $\mathcal{D}$, whose first coordinate is the \emph{unreliability} of the fault tree \cite{ruijters2015fault}.
\end{example}

A difference between our general setting and the AFT setting is that the structure function of an AFT is required to be increasing and nonconstant, while our function $\varphi$ does not have such constraints. In principle, this makes it possible to model more complex safety-security interactions, in which, for example, failures stop attacks from propagating. Such behaviour is called \emph{antagonism} \cite{kriaa2015survey}, which can be incorporated into AFTs e.g. by adding a \texttt{NOT}-gate \cite{nicoletti2023model}.

\subsection{Relation to quantitative AFT analysis methods} \label{ssec:compare}

Before diving into our method of computing $\recht{PMC}(\hat{\Lambda})$ and $\recht{PEC}(\hat{\Lambda})$, we first discuss to what extent our model and metrics differ from earlier approaches to quantitative AFT analysis: the two relevant works are \cite{kumar2017quantitative} and \cite{andre2019parametric}. Both do not give a formal definition of their metrics; instead, they are obtained by translating the AFT to an automata model and analyzing these using model checking tools. 

\cite{kumar2017quantitative} allows BASs/BCFs to be endowed with a wide range of parameters, including cost (fixed and variable), damage, and success probabilities, both as constant probabilities and as failure rates. Indeed, one of the major differences to our model is that their AFTs are dynamic, and are able to describe the behaviour of the system over time. Combined with the expressivity of the UPPAAL query language \cite{bengtsson1996uppaal}, this allows for a detailed analysis. However, it does not come with the means to find (Pareto-)optimal strategies. Instead, individual strategies are analyzed for multiple metrics.

\cite{andre2019parametric} takes a different approach, assigning cost and time parameters to each BAS/BCF and then using the IMITATOR model checker to determine whether a strategy with given cost/time values exists. In principle, this gives a (potentially time-consuming) method to find the cost/time Pareto front. However, their approach does not incorporate probability, which is precisely the factor that makes our definition of strategy so involved.

We conclude that existing methods are not able to compute $\recht{PMC}(\hat{\Lambda})$ and $\recht{PEC}(\hat{\Lambda})$, as they either are concerned with different metrics on AFTs, or are not able to synthesize strategies.

\section{Boolean Decision Diagrams}

In this section, we briefly review the theory of binary decision diagrams, which form compact representations of Boolean functions \cite{bryant1986graph}. We will apply this to the Boolean functions of quantified scenarios to calculate $\recht{PMC}(\hat{\Lambda})$ and $\recht{PEC}(\hat{\Lambda})$.

\begin{definition}
Let $C$ be a finite set, and let $<$ be a strict linear order on $C$. A \emph{reduced ordered binary decision diagram} (ROBDD) on $(C,<)$ is a tuple $B = (V,E,t_V,t_E)$ where:
\begin{enumerate}
\item $(V,E)$ is a rooted directed acyclic graph where every nonleaf has exactly two children.
\item $t_V\colon V \rightarrow C \cup \BB$ satisfies $t_V(v) \in \BB$ iff $v$ is a leaf.
\item $t_E\colon E \rightarrow \{\tzero,\tone\}$ is such that the two outgoing edges of each nonleaf $v$ have different values; the two children are denoted $c_{\tzero}(v)$ and $c_{\tone}(v)$ depending on the edge values.
\item If $(v,v')$ is an edge and $t_V(v) = F_i$, $t_V(v') = F_{i'}$, then $i < i'$.
\item If $v,v'$ are two nodes such that $t_V(v) = t_V(v')$, and for nonleaves furthermore $c_{\tzero}(v) = c_{\tzero}(v')$ and $c_{\tone}(v) = c_{\tone}(v')$, then $v = v'$.
\end{enumerate}
\end{definition}

An ROBDD $B$ represents a Boolean function $\varphi\colon\BB^C \rightarrow \BB$ as follows: let $z \in \BB^C$, and start from the root $\R{B}$. At a non-leaf node $v \in V$, we proceed to $c_{\tzero}(v)$ or $c_{\tone}(v)$, depending on the value of $z_{t_V(v)}$ (recall that $t_V(v) \in C$). When we arrive at a leaf $v$, we define $\varphi(z) = t_V(v) \in \BB$. An important fact is that every $\varphi$ can be represented this way \cite{bryant1986graph}. Different choices of $<$ result in different ROBDDs representing the same $\varphi$, and in general, the size of the ROBDD heavily depends on $<$. Finding the optimal $<$ is NP-hard, but heuristics exist \cite{bouissou1997bdd}.

\section{Analysis of quantified scenarios}

In this section, we show how ROBDDs can be used to calculate $\recht{PMC}(\hat{\Lambda})$ and $\recht{PEC}(\hat{\Lambda})$. First, we describe how to create the ROBDD of a quantified scenario. The first step is to transform the time relation between failures and attacks implicit in $Q$ into a partial order. This can be done as follows:

\begin{definition}
Let $\Lambda = (F,A,\varphi,Q)$ be a scenario. Then we define the relation $\prec_{\Lambda}$ on $F \cup A$ as follows, for $f \neq f' \in F$ and $a \neq a' \in A$:
\begin{align*}
a \prec_{\Lambda} a' & \textrm{ iff $Q_a \subset Q_{a'}$,} & a \prec_{\Lambda} f & \textrm{ iff $f \notin Q_a$,}\\
f \prec_{\Lambda} a & \textrm{ iff $f \in Q_a$,} & f \prec_{\Lambda} f' & \textrm{ iff $\exists a \in A. f \in Q_a \not \ni f'$.}
\end{align*}
\end{definition}

\begin{lemma} \label{lem:spo}
The relation $\prec_{\Lambda}$ is a strict partial order.
\end{lemma}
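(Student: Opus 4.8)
The plan is to verify the two defining properties of a strict partial order, irreflexivity and transitivity, on the set $F \cup A$ (asymmetry then follows). Irreflexivity I would dispose of first, and it is immediate: each of the four defining clauses of $\prec_{\Lambda}$ becomes contradictory when specialized to a single element. Indeed $a \prec_{\Lambda} a$ would require $Q_a \subset Q_a$, impossible for strict inclusion, while $f \prec_{\Lambda} f$ would require some $a$ with simultaneously $f \in Q_a$ and $f \notin Q_a$. Thus no element relates to itself, and the real work is transitivity.

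For transitivity I would prove that $x \prec_{\Lambda} y$ and $y \prec_{\Lambda} z$ imply $x \prec_{\Lambda} z$ by a case analysis on the membership of $x,y,z$ in $A$ versus $F$, giving eight cases. Roughly half are routine and use only monotonicity of set membership along $\subseteq$: for instance $a \prec_{\Lambda} a' \prec_{\Lambda} a''$ follows from transitivity of $\subset$; $a \prec_{\Lambda} a' \prec_{\Lambda} f$ uses $Q_a \subseteq Q_{a'}$ together with $f \notin Q_{a'}$; and $f \prec_{\Lambda} a \prec_{\Lambda} f'$ is handled because the middle element $a$ already witnesses $f \in Q_a \not\ni f'$, which is exactly the definition of $f \prec_{\Lambda} f'$.

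The crux, and the step I expect to be the main obstacle, is the remaining cases, in which one must combine information coming from two distinct sets $Q_a, Q_{a'}$; here the hypothesis that $\{Q_a \mid a \in A\}$ is linearly ordered by $\subseteq$ is exactly what makes the argument go through. Two representative instances: for $a \prec_{\Lambda} f \prec_{\Lambda} a'$ we have $f \notin Q_a$ and $f \in Q_{a'}$, and comparability forces $Q_a \subseteq Q_{a'}$ (otherwise $f \in Q_{a'} \subseteq Q_a$, a contradiction), the inclusion being strict since $f \in Q_{a'} \setminus Q_a$, giving $a \prec_{\Lambda} a'$. For $f \prec_{\Lambda} f' \prec_{\Lambda} f''$ I would pick witnesses $a$ with $f \in Q_a \not\ni f'$ and $a'$ with $f' \in Q_{a'} \not\ni f''$; comparability rules out $Q_{a'} \subseteq Q_a$ (which would force $f' \in Q_a$), so $Q_a \subseteq Q_{a'}$, and then $a$ itself witnesses $f \prec_{\Lambda} f''$, since $f \in Q_a$ while $f'' \notin Q_{a'} \supseteq Q_a$. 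The cases $a \prec_{\Lambda} f \prec_{\Lambda} f'$ and $f \prec_{\Lambda} f' \prec_{\Lambda} a$ are analogous and invoke linearity in the same manner.

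Alternatively, and this is the route I would actually prefer for cleanliness, I would sidestep the case analysis by exploiting the chain structure directly. Enumerate the distinct values of $\{Q_a \mid a \in A\}$ as $Q^{(1)} \subsetneq \cdots \subsetneq Q^{(m)}$, and define $t\colon F \cup A \to \mathbb{R}$ by $t(a) = j + \tfrac{1}{2}$ when $Q_a = Q^{(j)}$, and $t(f) = \min\{j \mid f \in Q^{(j)}\}$ (set to $m+1$ if $f$ lies in no $Q_a$). A short verification of the four defining clauses shows $x \prec_{\Lambda} y \iff t(x) < t(y)$; since $\prec_{\Lambda}$ is then the pullback of the strict linear order $<$ on $\mathbb{R}$, it is automatically irreflexive and transitive. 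The only nontrivial clause in this verification is again the failure–failure one, where $f \prec_{\Lambda} f'$ unwinds, through the chain, to $\ell(f) < \ell(f')$ for $\ell$ the first-appearance level — precisely the point where linearity of the $Q_a$ is indispensable.
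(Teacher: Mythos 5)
Your proof is correct, and your primary route is essentially the paper's own: the same eight-way case split on membership of $x,y,z$ in $F$ versus $A$, with the linearity of $\{Q_a \mid a\in A\}$ under $\subseteq$ invoked in precisely the four cases you identify as the crux ($a \prec_{\Lambda} f \prec_{\Lambda} a'$, $f \prec_{\Lambda} f' \prec_{\Lambda} f''$, and the two mixed failure--failure cases); your treatment of those is the same as (and somewhat more carefully argued than) the paper's. Your preferred alternative, however, is genuinely different from what the paper does. By enumerating the chain $Q^{(1)} \subsetneq \cdots \subsetneq Q^{(m)}$ and exhibiting a level function $t\colon F \cup A \to \mathbb{R}$ with $x \prec_{\Lambda} y \iff t(x) < t(y)$, you obtain irreflexivity and transitivity for free as the pullback of $<$ on $\mathbb{R}$, replacing eight transitivity cases by four clause verifications of which only the $f \prec_{\Lambda} f'$ clause is nonroutine (and your check of that clause, including the $t(f)=m+1$ boundary case for failures lying in no $Q_a$, goes through). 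This buys more than the lemma claims: it shows $\prec_{\Lambda}$ is a strict weak order induced by a real-valued ranking, which makes the existence of the linearizations used later for the ROBDD construction immediate. Either argument is acceptable; the embedding is shorter and less prone to the kind of notational slips that the case enumeration invites.
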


We use this order to define the ROBDD of $\Lambda$.

\begin{definition}
Let $\Lambda = (F,A,\varphi,Q)$ be a scenario. A \emph{linearization} of $\Lambda$ is a strict linear order of $F \cup A$ that extends $\prec_Q$. A ROBDD is said to \emph{represent} $\Lambda$ if it represents $\varphi$, and if its variable order is a linearization of $\prec_Q$.
\end{definition}

\begin{wrapfigure}[6]{r}{3cm}
\vspace{-7em}
\includegraphics[width=3cm]{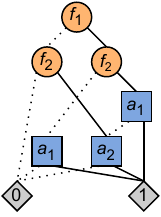}
\end{wrapfigure}

\begin{example} \label{ex:AFT3}
We continue Example \ref{ex:AFT2}. In this case $f_i \prec_Q a_j$ for all $i,j \in \{1,2\}$. We can extend this to a linear order by taking $f_1 < f_2 < a_1 < a_2$; the ROBDD corresponding to this order is pictured below, with filled lines connecting $v$ to $c_{\tone}(v)$ and dotted lines connecting $v$ to $c_{\tzero}(v)$.
\end{example}

\subsection{Relation to Markov decision processes}

A quantified scenario models a setting with probabilistic events, decisions, and costs associated to these decisions. A popular framework for modeling such settings are Markov decision processes (MDPs), which have a long history of quantitative analysis \cite{feinberg2012handbook}. In this section, we explain how the ROBDD of a quantified scenario can be viewed as an MDP and to what extent this aids us in calculating $\recht{PMC}(\hat{\Lambda})$ and $\recht{PEC}(\hat{\Lambda})$. We first recap the definition of an MDP.

\begin{definition}
A \emph{Markov decision process} is a tuple $(S,H,P,R)$ consisting of:
\begin{enumerate}
\item a finite set $S$ called the \emph{state space};
\item for each state $s \in S$ a finite set $H_s$ of \emph{actions};
\item for each two states $s,s' \in S$ and each action $h \in H_s$, a probability $P(s'|s,h) \in [0,1]$, such that $\sum_{s' \in S} P(s'|s,h) = 1$ for all $s \in S$ and $h \in H_s$;
\item for each two states $s,s' \in S$ and each action $h \in H_s$, a reward $R(s,h,s') \in \mathbb{R}$.
\end{enumerate}
A \emph{policy} $\sigma$ is the choice of a  $\sigma(s) \in H_s$ for each $s \in S$ with $H_s \neq \varnothing$.
\end{definition}

Given an MDP $(S,H,P,R)$ and a policy $\sigma$, an initial state $s_0$ defines a walk through $S$ as follows: at time $t$ and state $s_t$, one has $s_{t+1} = s'$ with probability $P(s'|s_t,\sigma(s_t))$. This process continues until a maximum time $t_{\max}$ is reached (which may be $\infty$), or until a $s$ with $H_s = \varnothing$ is reached. In other words, a choice of $\sigma$ turns $(S,H,P,R)$ into a discrete-time Markov chain.

MDPs are used to model a wide variety of systems in which decisions and probabilities both play a role. They are then analyzed to find a strategy that either satisfies a certain property or optimizes a certain objective function. Many types of analyses are possible, but two of the most important ones are:
\begin{description}
    \item[maximize reachability:] find the strategy that maximizes the probability that the walk will reach a given state.
    \item[maximize expected reward:] find the strategy $\sigma$ that maximizes the expected total reward $\mathbb{E}\left[\sum_{t=1}^{t_{\max}} R(s_{t-1},\sigma(s_{t-1}),s_t)\right]$.
\end{description}

As both probabilities and decisions play a role in quantified scenarios, it comes as no surprise that they can also be modeled as MDPs. In fact, the ROBDD of a quantified scenario can be considered an MDP: a vertex $v$ with its label in $F$ has only one action, with transitions to $c_{\tone}(v)$ and $c_{\tzero}(v)$ with probability $\pi_{t_V(v)}$ and $1-\pi_{t_V(v)}$, respectively. If $t_V(v) \in A$, then there are two actions at $v$, with transitions to $c_{\tone}(v)$ and $c_{\tzero}(v)$, respectively, both with probability 1. The transition $v \rightarrow c_{\tone}(v)$ also has reward $-\gamma_{t_V(v)}$, to denote the cost required to perform this action. This can be formalized into the following definition.

\begin{definition}
Let $\hat{\Lambda}$ be a quantified scenario. Its \emph{associated Markov decision process} is the MDP $(S,H,P,R)$ given by
\begin{align*}
S &= V,\\
H_v &= \begin{cases}
    \{\tzero,\tone\}, & \textrm{ if $t_V(v) \in A$},\\
    \{*\}, & \textrm{ if $t_V(v) \in F$},\\
    \varnothing, & \textrm{ if $t_V(v) \in \BB$},
\end{cases}\\
P(v'|v,h) &= \begin{cases}
 1, & \textrm{ if $t_V(v) \in A$ and $v' = c_h(v)$,}\\
\pi_{t_V(v)}, & \textrm{ if $t_V(v) \in F$ and $v'= c_{\tone}(v)$,}\\
1-\pi_{t_V(v)}, & \textrm{ if $t_V(v) \in F$ and $v'= c_{\tzero}(v)$,}\\
0, & \textrm{ otherwise,}
\end{cases}\\
R(v,h,v') &= \begin{cases}
-\gamma_{t_V(v)}, & \textrm{ if $t_V(v) \in A$, $h = \tone$, and $v' = c_{\tone}(v)$,}\\
0, & \textrm{ otherwise.}
\end{cases}
\end{align*}
\end{definition}

A policy for this MDP corresponds to a strategy for the scenario; however, the opposite is not true since a strategy's decision at a BDD node may depend on the path taken to get to that node, which is not allowed in MDPs. Nevertheless, our results in Section \ref{sec:BU} show that Pareto-optimal strategies are history-independent in the same sense. To calculate $\recht{PMC}(\hat{\Lambda})$ and $\recht{PEC}(\hat{\Lambda})$, we need techniques that differ from standard MDP techniques in several aspects:

\noindent\textbf{acyclicity:} The MDP is acyclic, so analysis can be simplified considerably \cite{bellman1957markovian}.

\noindent\textbf{multiple objectives:} Instead of optimizing a single objective, we either consider both probability (reachability of node $\tone$) and maximum/expected cost simultaneously. Multi-objective analysis of MDPs to return Pareto fronts has been studied in the literature \cite{etessami2008multi}.

\noindent\textbf{maximum cost:} MDP analysis usually considers the expected reward of a strategy, rather than the worst case probabilistic outcome. The worst case is more complicated, but algorithms exist as well \cite{de2005model}.

As can be seen from these references, the special properties of our setting have been individually tackled in the literature already, but to the best of my knowledge they have not been considered in unison before. In the next section, we provide an algorithm that finds the probability/cost Pareto front. As we will discuss there, it combines the ideas of the special cases handled in the literature.

\subsection{ROBDD analysis} \label{sec:BU}

In this section we show how we can calculate $\recht{PMC}(\hat{\Lambda})$ and $\recht{PEC}(\hat{\Lambda})$ using the ROBDD $B$. We start with $\recht{PMC}(\hat{\Lambda})$. Let $<$ be any linear order on $F \cup A$ extending $\prec_Q$, and suppose that the minimal element w.r.t. $<$ is $a \in A$. Let $\varphi^{\tzero},\varphi^{\tone}$ be the functions $\BB^F \times \BB^{A\setminus\{a\}} \rightarrow \BB$ obtained by substituting $\tzero$ (resp. $\tone$) for the $a$-argument of $\varphi$. We can consider $\varphi^{\tzero}$ as the function of a quantified scenario $\hat{\Lambda}^{\tzero}$, where $F^{\tzero},A^{\tzero},Q^{\tzero},\pi^{\tzero},\gamma^{\tzero}$ are all obtained from $\hat{\Lambda}$ by removing $a$ where appropriate. Any strategy $\sigma^{\tzero}$ of $\Lambda^{\tzero}$ can be extended to a strategy of $\sigma'^{\tzero}$ of $\Lambda$ by first choosing not to activate $a$, and then following $\sigma^{\tzero}$. Similarly, we can define $\hat{\Lambda}^{\tone}$, and extend a strategy $\sigma^{\tone}$ of $\Lambda^{\tone}$ to a strategy $\sigma'^{\tone}$ of $\Lambda$ by first activating $a$, and then following $\sigma^{\tone}$. Then each strategy for $\Lambda$ is of the form $\sigma'^{\tzero}$ or $\sigma'^{\tone}$, and
\begin{align*}
\tau_{\recht{m}}(\sigma'^{\tzero}) &= \tau_{\recht{m}}(\sigma^{\tzero}), & \tau_{\recht{m}}(\sigma'^{\tone}) &= \tau_{\recht{m}}(\sigma^{\tone})+ \vvec{0\\\pi(a)}.
\end{align*}
Thus $\tau_{\recht{m}}(\Sigma_{\Lambda}) = \tau_{\recht{m}}(\Sigma_{\Lambda^{\tzero}}) \cup \left(\tau_{\recht{m}}(\Sigma_{\Lambda^{\tone}})+\vvec{0\\\pi(a)}\right)$. When the minimal element w.r.t. $<$ is $f \in F$ instead, we get a similar expression of $\tau_{\recht{m}}(\Sigma_{\Lambda})$ in terms of $\tau_{\recht{m}}(\Sigma_{\Lambda^{\tzero}})$, $\tau_{\recht{m}}(\Sigma_{\Lambda^{\tone}})$, and $\pi_f$.

The key insight is now that in the ROBDD $B$, the sub-ROBDD with root $c_{\tzero}(\R{B})$ represents $\Lambda^{\tzero}$, while the sub-ROBDD with root $c_{\tone}(\R{B})$ represents $\Lambda^{\tone}$. We can leverage this to compute $\tau_{\recht{m}}(\Sigma_{\Lambda})$ bottom-up over the BDD. In fact, we can compute $\recht{PMC}(\hat{\Lambda})$ bottom-up, by discarding non-Pareto-optimal attacks at each step. This yields the following theorem, with its PEC counterpart below.

\begin{theorem} \label{thm:mainmax}
Let $\hat{\Lambda}$ be a quantified scenario. Let $B_{\Lambda} = (V,E,t_V,t_E)$ be an ROBDD representing $\Lambda$. For $v \in V$, define sets $G_v,\tilde{G}_v \subseteq \mathcal{D}$ recursively by
\begin{align*}
\tilde{G}_v &= \begin{cases}
\left\{\vvec{0\\0}\right\}, & \text{if $t_V(v) = \tzero$,}\\
\left\{\vvec{1\\0}\right\}, & \text{if $t_V(v) = \tone$,}\\
\left\{\vvec{\pi_f p_{\tzero}+(1-\pi_f)p_{\tone}\\\max(c_{\tzero},c_{\tone})} \ \middle| \ \vvec{p_{\tzero}\\c_{\tzero}} \in G_{c_{\tzero}(v)}, \vvec{p_{\tone}\\c_{\tone}} \in G_{c_{\tone}(v)}\right\}, & \textrm{if $t_V(v) = f \in F$,}\\
G_{c_{\tzero(v)}} \cup \left\{\vvec{p_{\tone}\\c_{\tone}+\gamma_a} \ \middle| \ \vvec{p_{\tone}\\c_{\tone}} \in G_{c_{\tone}(v)}\right\}, & \text{if $t_V(v) = a \in A$,}
\end{cases}\\
G_v &= \PF(\tilde{G}_v).
\end{align*}
Then $G_{\R{B}} = \recht{PMC}(\hat{\Lambda})$.
\end{theorem}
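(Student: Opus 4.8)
The plan is to prove, by induction on the number $|F|+|A|$ of variables, the statement: for every quantified scenario $\hat{\Lambda}$ and every ROBDD $B$ representing $\Lambda$, the recursion of the theorem yields $G_{\R{B}} = \recht{PMC}(\hat{\Lambda})$. The running text before the theorem already supplies the algebraic core. Writing $w$ for the $<$-minimal element of $F\cup A$ and $\hat{\Lambda}^{\tzero},\hat{\Lambda}^{\tone}$ for the two cofactor scenarios obtained by fixing $w$, one has, when $w=a\in A$,
\[
\me(\Sigma_{\Lambda}) = \me(\Sigma_{\Lambda^{\tzero}}) \cup \left(\me(\Sigma_{\Lambda^{\tone}})+\vvec{0\\\gamma_a}\right),
\]
and an analogous identity when $w=f\in F$, this time combining a pair $(\sigma^{\tzero},\sigma^{\tone})$ of cofactor strategies by the $\pi_f$-weighted convex combination of their probabilities and the maximum of their costs. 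I would first record why these hold at the level of strategies: since $<$ extends $\prec_{\Lambda}$, minimality of $w$ forces $Q_a=\varnothing$ when $w=a$ (the decision on $a$ is a single bit, giving $\Sigma_{\Lambda}\cong\Sigma_{\Lambda^{\tzero}}\sqcup\Sigma_{\Lambda^{\tone}}$), and forces $f\in Q_a$ for every $a$ when $w=f$ (every attack may branch on $f$, giving $\Sigma_{\Lambda}\cong\Sigma_{\Lambda^{\tzero}}\times\Sigma_{\Lambda^{\tone}}$). The maximum-cost identity then follows by splitting $\max_{x}$ over $x_f=\tzero$ and $x_f=\tone$, and the probability identity by conditioning on $x_f$.

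\textbf{Pruning is sound.} The engine of the proof is a small calculus showing that discarding dominated points at each node cannot change the final front. I would prove three facts about $\PF$ on finite subsets of $\mathcal{D}$: (i) $\PF(X\cup Y)=\PF(\PF(X)\cup\PF(Y))$; (ii) $\PF(X+\delta)=\PF(X)+\delta$ for a cost shift $\delta=\vvec{0\\\gamma_a}$, since $\sqsubseteq$ is invariant under adding a constant cost; and (iii) the $f$-combination is monotone in each argument, i.e. $d\sqsubseteq d'$ implies that combining $d$ with any fixed $e$ is $\sqsubseteq$ combining $d'$ with $e$. Monotonicity is immediate from the formula: raising the probability coordinate of an input raises the convex combination (the weights $\pi_f,1-\pi_f$ are nonnegative), and lowering a cost coordinate cannot raise $\max(c_{\tzero},c_{\tone})$. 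These give exactly what the recursion needs: in the $a$-case, $\PF(\tilde{G}_v)=\PF\!\big(G_{c_{\tzero}(v)}\cup(G_{c_{\tone}(v)}+\delta)\big)=\PF\!\big(\me(\Sigma_{\Lambda^{\tzero}})\cup(\me(\Sigma_{\Lambda^{\tone}})+\delta)\big)=\recht{PMC}(\hat{\Lambda})$ by (i), (ii) and the induction hypothesis; in the $f$-case, monotonicity lets me replace the full cofactor image-sets by their fronts $G_{c_{\tzero}(v)},G_{c_{\tone}(v)}$ before combining, again recovering $\PF(\me(\Sigma_{\Lambda}))$.

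\textbf{Matching the BDD and the base case.} The induction splits according to whether $w$ labels $\R{B}$. If $t_V(\R{B})=w$, I apply the cofactor decomposition: by standard ROBDD theory the sub-diagrams rooted at $c_{\tzero}(\R{B})$ and $c_{\tone}(\R{B})$ represent $\hat{\Lambda}^{\tzero}$ and $\hat{\Lambda}^{\tone}$, each an $(|F|+|A|-1)$-variable scenario, so the induction hypothesis and the pruning calculus close this case. This step needs the routine observation that restricting $<$ to $F\cup A\setminus\{w\}$ still linearizes $\prec_{\Lambda^{\tzero}}$ and $\prec_{\Lambda^{\tone}}$ (deleting a variable creates no new order relations among the rest). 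The base case $|F|+|A|=0$ is a single leaf, where $G_{\R{B}}$ is $\{\vvec{\varphi\\0}\}$, matching the unique empty strategy.

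\textbf{Main obstacle.} The delicate point is the remaining branch, where $w$ does \emph{not} label $\R{B}$: a reduced diagram omits any variable on which $\varphi$ does not depend, so the root need not carry the $<$-minimal element. I would handle this with an irrelevant-variable lemma: if $\varphi$ is independent of $w$ then $\recht{PMC}(\hat{\Lambda})=\recht{PMC}(\widehat{\Lambda\setminus w})$, after which the same $B$ represents the smaller scenario $\Lambda\setminus w$ and the induction hypothesis applies. The crux is the case $w=f\in F$, where the attacker may still branch its actions on the observed but logically irrelevant failure $f$; one must show such branching is never Pareto-beneficial. This is exactly where the maximum-cost metric behaves well: splitting a branching strategy into its cofactor strategies $\sigma^{\tzero},\sigma^{\tone}$, the non-branching strategy that always plays whichever of the two has the larger probability attains probability at least the $\pi_f$-convex combination $\hat{\pi}(\sigma)$, while its maximal cost is at most the $\max$ of the two cofactor costs, hence at most $\hat{\gamma}_{\recht{m}}(\sigma)$; thus the branching strategy is dominated (when $w=a$ the superfluous action only adds cost and is dominated outright). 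Assembling the base case, the pruning calculus, the cofactor identities, and this lemma then closes the induction.
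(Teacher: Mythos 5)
Your proposal is correct. Its core --- induction on $|F|+|A|$, the cofactor identities expressing $\me(\Sigma_{\Lambda})$ in terms of the two reduced scenarios, and the small calculus showing that $\PF$ commutes with the monotone combination operators so that pruning dominated points at each node is sound --- coincides with the paper's proof (Sections~\ref{ssec:commute}--\ref{ssec:fobdd}). The genuine divergence is in how each argument copes with the fact that a \emph{reduced} diagram may skip variables. The paper first proves the theorem for the full OBDD, where the root always carries the $<$-minimal element so the cofactor step always applies, and then shows separately that the reduction steps taking the FOBDD to the ROBDD leave every $G_v$ unchanged (Lemma~\ref{lem:reduce}). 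You stay on the ROBDD throughout and absorb the skipped-variable case into an irrelevant-variable lemma proved at the level of strategies: when $\varphi$ does not depend on $f$, a strategy branching on $f$ is dominated under maximal cost by the non-branching strategy that always plays its better cofactor, and an irrelevant attack only adds cost; hence deleting the minimal variable changes neither $\Sigma$-images up to domination nor $\recht{PMC}(\hat{\Lambda})$. Your dominance computation is essentially the paper's argument for the ``$c_{\tzero}(v)=c_{\tone}(v)$'' reduction step, transported from the sets $G_v$ to strategies. What your route buys is that you never need the FOBDD or the fact that every OBDD arises from it by reduction steps; what you must supply instead is the (easy) observation that labels strictly increase along ROBDD paths, so the root is labelled by the $<$-least variable on which $\varphi$ actually depends, together with the short argument that the two Pareto fronts agree because $\me(\Sigma_{\Lambda\setminus w})\subseteq\me(\Sigma_{\Lambda})$ and every point of the larger set is dominated by a point of the smaller. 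One detail to tighten: in the pruning calculus, monotonicity of the $f$-combination alone does not let you replace $I_1$ by $\PF(I_1)$; you also use that every element of a finite subset of $\mathcal{D}$ lies above some Pareto-optimal element, exactly as in the paper's Proposition~\ref{prop:commutef}.
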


\begin{theorem} \label{thm:mainexp}
    Let $\hat{\Lambda}$ be a quantified scenario. Let $B_{\Lambda} = (V,E,t_V,t_E)$ be an ROBDD representing $\Lambda$. For $v \in V$, define sets $J_v,\tilde{J}_v \subseteq \mathcal{D}$ recursively by
\begin{align*}
\tilde{J}_v &= \begin{cases}
\left\{\vvec{0\\0}\right\}, & \text{if $t_N(v) = \tzero$,}\\
\left\{\vvec{1\\0}\right\}, & \text{if $t_N(v) = \tone$,}\\
\left\{\vvec{\pi_f p_{\tzero}+(1-\pi_f)p_{\tone}\\\pi_fc_{\tzero}+(1-\pi_f)c_{\tone}} \ \middle| \ \vvec{p_{\tzero}\\c_{\tzero}} \in J_{c_{\tzero}(v)}, \vvec{p_{\tone}\\c_{\tone}} \in J_{c_{\tone}(v)}\right\}, & \textrm{if $t_N(v) = f \in F$,},\\
J_{c_{\tzero(v)}} \cup \left\{\vvec{p_{\tone}\\c_{\tone}+\gamma_a} \ \middle| \ \vvec{p_{\tone}\\c_{\tone}} \in J_{c_{\tone}(v)}\right\}, & \text{if $t_N(v) = a \in A$,}
\end{cases}\\
J_v &= \SCPF(\tilde{J}_v).
\end{align*}
Then $J_{\R{B}} = \recht{PEC}(\hat{\Lambda})$.
\end{theorem}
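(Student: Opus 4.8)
The plan is to prove $J_{\R{B}} = \recht{PEC}(\hat\Lambda)$ by bottom-up induction over the ROBDD, in close parallel to the argument preceding Theorem~\ref{thm:mainmax}, but replacing the maximal-cost metric $\me$ by the expected-cost metric $\ee$ and the plain Pareto front $\PF$ by the strictly convex front $\SCPF$. For each node $v$ let $\Lambda_v$ be the sub-scenario represented by the sub-ROBDD rooted at $v$ (with the inherited variable order), and set $\Phi_v := \ee(\Sigma_{\Lambda_v}) \subseteq \mathcal{D}$. Since $\recht{PEC}(\hat\Lambda) = \SCPF(\Phi_{\R{B}})$ by definition, it suffices to prove $J_v = \SCPF(\Phi_v)$ for every $v$.

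First I would establish the exact recursion satisfied by $(\Phi_v)_v$, ignoring pruning. Taking the root variable of the sub-BDD at $v$ to be the minimal element of a linearization of $\prec_\Lambda$, the strategy decomposition of the pre-theorem discussion applies verbatim, as it concerns only $\Sigma_\Lambda$ and not the chosen metric. For an $a$-node this gives $\Sigma_{\Lambda_v} = \Sigma_{\Lambda_{c_{\tzero}(v)}} \sqcup \Sigma_{\Lambda_{c_{\tone}(v)}}$ (first decide whether to activate $a$); for an $f$-node, minimality of $f$ forces $f \in Q_a$ for all $a$ (otherwise $a \prec_\Lambda f$ would contradict minimality), so $f$ is observed before every decision and $\Sigma_{\Lambda_v} \cong \Sigma_{\Lambda_{c_{\tzero}(v)}} \times \Sigma_{\Lambda_{c_{\tone}(v)}}$ with the two sub-strategies chosen independently. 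Conditioning on the value of the root variable then yields the metric recursion: at an $f$-node both coordinates of $\ee$ become a convex combination of the two children's values with weights $\pi_f$ and $1-\pi_f$ (a BCF itself contributes no cost), while at an $a$-node the $\tone$-branch incurs the extra additive cost $\gamma_a$. Thus $\Phi_v = \Phi_{c_{\tzero}(v)} \cup \bigl(\Phi_{c_{\tone}(v)} + \vvec{0\\\gamma_a}\bigr)$ for $a$-nodes, and for $f$-nodes $\Phi_v$ is the $\pi_f$-weighted Minkowski sum $S_1 \boxplus_{\pi_f} S_2 := \pi_f S_1 + (1-\pi_f)S_2$ of its children's value sets. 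This is exactly the recursion for $\tilde{J}_v$ but without the $\SCPF$ step.

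The heart of the proof, and the step I expect to be the main obstacle, is showing that inserting $\SCPF$ at every node is harmless, i.e. that $\SCPF$ distributes over both node operations. The structural fact I would use is that $\SCPF(S)$ consists precisely of the points of $S$ that uniquely maximize some linear functional in the open dual cone of $\sqsubseteq$ (strictly positive weight on probability, strictly negative on cost); equivalently, $\SCPF(S)$ is the set of Pareto-optimal extreme points of $\recht{Conv}(S)$, and $\recht{Conv}(\SCPF(S))$ shares the Pareto frontier of $\recht{Conv}(S)$. From this I would prove two lemmas: (i) for the shift-and-union operation, $\SCPF(S_1 \cup S_2) = \SCPF(\SCPF(S_1)\cup\SCPF(S_2))$, and $\SCPF$ commutes with translation by $\vvec{0\\\gamma_a}$, because an extreme Pareto point of $\recht{Conv}(S_1\cup S_2)$ lies in some $S_i$ and is already extreme and undominated there; and (ii) for the $f$-node operation, $\SCPF(S_1 \boxplus_{\pi_f} S_2) = \SCPF(\SCPF(S_1)\boxplus_{\pi_f}\SCPF(S_2))$. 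Lemma (ii) is the delicate one: I would argue via supporting hyperplanes that a point uniquely maximizing a dual-cone functional $\lambda$ over $\recht{Conv}(S_1\boxplus_{\pi_f}S_2) = \pi_f\recht{Conv}(S_1)+(1-\pi_f)\recht{Conv}(S_2)$ is the sum of the unique $\lambda$-maximizers over the two summands, each of which therefore lies in the summand's $\SCPF$; the subtle pitfall to rule out is that a vertex of the sum might arise from non-vertices of the summands, which uniqueness of the $\lambda$-maximizer precludes.

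Finally I would assemble the induction. The leaf cases are immediate from the definitions of $\tilde J_v$. For the inductive step at a node $v$, the inductive hypothesis gives $J_{c_b(v)} = \SCPF(\Phi_{c_b(v)})$ for $b \in \BB$; the definition of $\tilde J_v$ applies the relevant node operation to $J_{c_{\tzero}(v)}$ and $J_{c_{\tone}(v)}$, and then $J_v = \SCPF(\tilde J_v)$. Applying the appropriate distributivity lemma converts $\SCPF$ of the operation on the \emph{pruned} children into $\SCPF$ of the operation on the \emph{full} children, which by the second paragraph equals $\SCPF(\Phi_v)$. Hence $J_v = \SCPF(\Phi_v)$, and at the root $J_{\R{B}} = \SCPF(\Phi_{\R{B}}) = \recht{PEC}(\hat\Lambda)$. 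A routine point to verify along the way is that the ROBDD reducedness conditions make each sub-scenario $\Lambda_v$ and its sub-ROBDD well-defined, so that $\Phi_v = \ee(\Sigma_{\Lambda_v})$ genuinely holds.
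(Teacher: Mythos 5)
Your overall architecture is the one the paper uses: the paper proves Theorem~\ref{thm:mainexp} by declaring it ``completely analogous'' to Theorem~\ref{thm:mainmax}, whose proof has exactly your three ingredients --- a strategy decomposition at the minimal variable (Propositions~\ref{prop:tauf} and~\ref{prop:taua}), commutation of the front operator with the two node operations (Propositions~\ref{prop:commutef} and~\ref{prop:commutea}, which for $\SCPF$ become your supporting-hyperplane lemmas), and a bottom-up induction. Your convex-geometric characterization of $\SCPF$ is a reasonable way to get the commutation lemmas, and your observation that minimality of $f$ forces $f \in Q_a$ for all $a$ is correct.

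The genuine gap is in the step you dismiss as ``a routine point to verify,'' namely that $\Phi_v = \ee(\Sigma_{\Lambda_v})$ satisfies your recursion on the \emph{ROBDD}. In an ROBDD the child $c_b(v)$ need not be labelled by the immediate successor of $t_V(v)$ in the variable order: reduction step (c) of Proposition~\ref{prop:bdd} deletes every node whose two children coincide, so the sub-ROBDD at $c_b(v)$ represents a scenario from which several variables of $\Lambda_v^{t_V(v)/b}$ have been silently removed. The strategy sets of these two scenarios are \emph{not} in bijection --- the unreduced scenario has extra strategies that condition on, or activate, the skipped variables --- so your identity $\Sigma_{\Lambda_v} \cong \Sigma_{\Lambda_{c_{\tzero}(v)}} \times \Sigma_{\Lambda_{c_{\tone}(v)}}$ (resp.\ the disjoint union for $a$-nodes) is false as stated at such nodes, and the recursion for $\Phi_v$ only holds after applying $\SCPF$. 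Closing this requires an argument of its own: for a skipped BCF $f$ one must show that $\SCPF\bigl(\pi_f S + (1-\pi_f)S\bigr) = \SCPF(S)$, i.e.\ that conditioning on an irrelevant failure produces only convex combinations of existing points (hence nothing strictly convex Pareto-optimal and new); for a skipped BAS $a$ one must show the translated copy $S + \vvec{0\\\gamma_a}$ is entirely dominated. The paper isolates precisely this issue by first proving the theorem for the FOBDD (Theorem~\ref{thm:fullmax}), where no variable is ever skipped and Lemma~\ref{lem:fobdd} makes the child sub-BDDs literally represent the reduced scenarios, and then showing in Lemma~\ref{lem:reduce} that each of the three FOBDD-to-ROBDD reduction steps leaves the computed front unchanged. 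You should either adopt that two-stage structure or prove the two skipped-variable identities above explicitly; without one of these, the inductive step of your argument does not go through at reduced nodes.
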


These theorems are related to MDP analysis techniques as follows: in general, MDPs can be solved by a translation into a linear programming problem. When the MDP is acyclic, as in our case, the associated matrix is upper triangular, and the linear programming problem can be solved inductively. When multiple rewards are considered, this becomes a multi-objective linear programming problem \cite{etessami2008multi}, of which the output is a Pareto front rather than a single value; this can still be computed inductively. Furthermore, the analysis of maximum cost, rather than expected cost, is generally more complicated as it is no longer history-independent \cite{de2005model}. In general, this can be tackled by increasing the number of variables of the optimization problem: in our case, our theorems show that this is not necessary and the same bottom-up approach still works for PMC.

\begin{example} \label{ex:AFT4}
We continue Example \ref{ex:AFT3}, with $\pi$ and $\gamma$ from Example \ref{ex:AFT2}. We calculate the $\tilde{G}_v$ bottom-up; for convenience, we identify nodes with their labels, using \emph{left}, \emph{right} to disambiguate when two nodes have the same label. Underlined vectors are not Pareto-optimal and discarded when going from $\tilde{G}_v$ to $G_v$.
\begin{align*}
\tilde{G}_{\tzero} &= \left\{\vvec{0\\0}\right\}, & \tilde{G}_{\tone} &= \left\{\vvec{1\\0}\right\},\\
\tilde{G}_{a_1,\recht{left}} = \tilde{G}_{a_2} &= \left\{\vvec{0\\0},\vvec{1\\10}\right\}, & \tilde{G}_{a_1,\recht{right}} &= \left\{\vvec{0\\0},\vvec{1\\10}\right\},\\
\tilde{G}_{f_2,\recht{left}} &= \left\{\vvec{0\\0},\vvec{0.5\\10}\right\}, &
\tilde{G}_{f_2,\recht{right}} &= \left\{\vvec{0\\0},\underline{\vvec{0.5\\10}},\vvec{1\\10}\right\},\\
\tilde{G}_{f_1} &= \left\{\vvec{0\\0},\underline{\vvec{0.25\\10}},\underline{\vvec{0.5\\10}},\vvec{0.75\\10}\right\}.
\end{align*}
Thus we obtain the same Pareto front $\left\{\vvec{0\\0},\vvec{0.75\\10}\right\}$ of Section \ref{sec:example}.
\end{example}

\begin{figure}
    \centering
    \includegraphics[width=10cm]{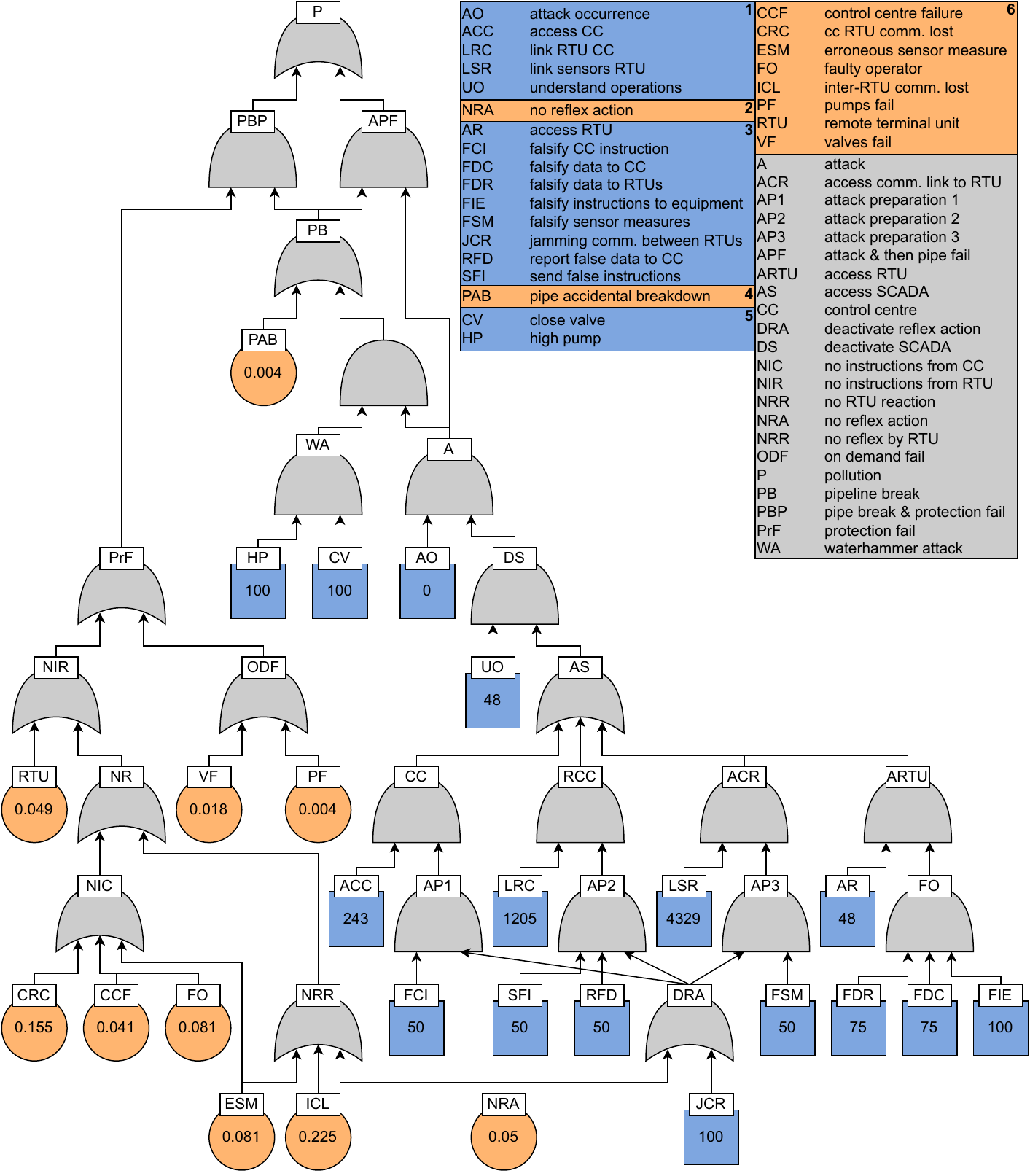}
    \caption{Attack-fault tree for an oil pipe line, adapted from \cite{kumar2017quantitative}. Orange circles denote BCFs with inscribed failure probability, while blue squares denote BASs with inscribed (time) cost. The partial order $\prec_Q$ is given by the blocks on the top left, where the coloured blocks are assumed to happen simultaneously with block $i$ before block $i+1$.}
    \label{fig:OPL}
\end{figure}

\section{Case study: oil pipeline}

We validate our methods by applying them to an oil pipe line case study (Fig.~\ref{fig:OPL}). The AFT is adapted from \cite{kumar2017quantitative}, which in turn is adapted from a BDMP model from \cite{kriaa2014safety}. The oil pipe can leak polluting material either through the combination of a cyberattack and a pipeline failure (\texttt{APF}) or pipeline failure, followed by the failure of the protection mechanism (\texttt{PBP}). A cyberattack targets the Supervisory Control And Data Acquisition (SCADA) system through the Remote Terminal Units (RTU) and Control Centre (CC); see \cite{kumar2017quantitative} for more details. The AFT in \cite{kumar2017quantitative} is dynamic: it contains dynamic attack and failure gates, and probabilities are given by exponential distributions. We use the dynamic gates to determine the order $\prec_Q$, and we convert the exponential distributions for BCFs to fixed probabilities by taking a 1-year time horizon. Furthermore, we measure attacker cost in the expected attack time, taking the expected value for each BAS with a given attack rate. For attacks with a detection probability, we instead assume that a BAS is reattempted 30 days later when detected.

To calculate PMC and PEC for this AFT, we first construct the BDD, which (ordering BCFs/BASs within each block of Fig.~\ref{fig:OPL} alphabetically) has 66 nodes. Applying Theorems \ref{thm:mainmax} and \ref{thm:mainexp}, we get
\begin{align*}
\recht{PMC}(\hat{\Lambda}) &= \left\{\vvec{0.0021\\0},\vvec{0.004\\346},\vvec{0.0538\\541}, \vvec{1\\546} \right\},\\
\recht{PEC}(\hat{\Lambda}) &= \left\{\vvec{0.0021\\0},\vvec{1\\545.2} \right\}.
\end{align*}
We can draw a number of conclusions. First, for PMC, of the $2^{31}$ possible strategies, only 4 are Pareto-optimal. Second, the attacker needs to spend considerable cost to affect the compromise probability. Third, the cost difference between the last two Pareto-optimal attacks is minimal compared to the probability difference; this suggest that an attacker will usually go for the last option, unless they have a hard cost constraint. This is also reflected in the PEC, where only the last attack is Pareto-optimal. The bottom-up calculation can be adapted to store the corresponding strategy; for $\vvec{0.0040\\346}$ this is the attack consisting of \texttt{AO}, \texttt{UO}, \texttt{AR}, \texttt{FDR}, \texttt{FDC}, \texttt{FIE}, which compromises the AFT when the BCF \texttt{PAB} occurs.

As discussed in Section \ref{ssec:compare}, no other work on AFTs is able to compute the cost-probability Pareto front, so we cannot do a direct computation time comparison. However, we can say that our state space of 66 BDD nodes is quite small compared to the size of the AFT (50 nodes); the analysis of this BDD only takes 80ms. By contrast, an automata-based approach like \cite{kumar2017quantitative,andre2019parametric} would involve creating automata for all nodes, synchronized via channels. The total number of locations for these automata will be a multiple of the number of nodes, so the total state space, obtained by composing the node automata, will be significantly larger, which would also result in a significantly longer computation time. 

\section{Conclusion}

We have introduced a novel method to quantitatively explore the interplay between safety and security via attack-fault trees, by calculating the Pareto front between failure probability and attacker cost using BDDs. This approach is the first full Pareto analysis on AFTs, and is significantly more lightweight than existing automata-based analysis methods. It also comes with rigorous definitions and proofs.

This work can be extended in several ways. First, right now the BDD from the AFT is obtained by taking a random linear order that extends $\prec_Q$. By adapting variable reordering heuristics \cite{jiang2017variable} to respect this partial order, we can get smaller BDDs representing the same AFT, thus reducing computation time. Second, in FT/AT analysis parameter values are not always known exactly. Incorporating uncertainty into the analysis requires tools from interval MDP analysis \cite{hahn2019interval}.

\bibliographystyle{splncs04}
\bibliography{refs.bib}

\appendix

\section{Proof of Lemma \ref{lem:spo}}

Irreflexivity and asymmetry are immediately clear, so what is left is transitivity: we need to prove that if $x \prec_{\Lambda} y$ and $y \prec_{\Lambda} z$, then $x \prec_{\Lambda} Z$ for all $x,y,z \in F \cup A$. This proof is not hard, but there are many cases to cover:
\begin{itemize}
\item $x,y,z \in F$: by assumption, there exist $a_1,a_2$ such that $x \in Q_{a_1}$, $y \notin Q_{a_1}$, $y \in Q_{a_2}$, $z \notin Q_{a_2}$. The $Q_a$ are linearly ordered by inclusion, so $Q_{a_1} \subset Q_{a_2}$ as the latter includes $b$ and the former does not. Hence $x \in Q_{a_2} \not \ni z$, so $x \prec_{\Lambda} z$.
\item $x,y \in F$, $z \in A$: By assumption, there exists a $a$ such that $x \in Q_a \not \in y$; furthermore, we know that $y \in Q_z$. Since the $Q_{a'}$ are linearly ordered by $\subseteq$, we get $Q_a \subset Q_z$. Hence $x \in Q_z$ and $x \prec_{\Lambda} z$.
\item $x,z \in F$, $y \in A$: We have $x \in Q_y \not \ni z$, so $x \prec_{\Lambda} z$.
\item $y,z \in F$, $x \in A$: By assumption $y \notin Q_x$, and there exists a $a\in A$ such that $y \in Q_a \not \ni z$. Hence $Q_x \subset Q_a$, so $z \notin Q_x$ and $x \prec_{\Lambda} z$.
\item $x \in F$, $x,y \in A$: We have $a \in Q_b \subset Q_c$, so $a \prec_{\Lambda} c$.
\item $y \in F$, $x,z \in A$: We have $Q_x \not \ni y \in Q_z$. Hence $Q_x \subset Q_z$, so $x \prec_{\Lambda} z$.
\item $z \in F$, $x,y \in A$: We have $Q_x \subset Q_y \not \ni z$, so $z \notin Q_x$, so $x \prec_{\Lambda} z$.
\item $x,y,z \in A$: We have $Q_x \subset Q_y \subset Q_z$, so $x \prec_{\Lambda} z$.
\end{itemize}

\section{Proof of Theorem \ref{thm:mainmax}}

In this section we prove Theorem \ref{thm:mainmax}. The proof of Theorem \ref{thm:mainexp} is completely analogous, so we will not present this. The proof is quite involved, but it consists of the following steps:
\begin{description}
\item[\ref{ssec:commute}:] We show that the Pareto front operator $\PF$ commutes with the operations used to create $\tilde{G}_v$ from $G_{c_{\tzero}(v)}$ and $G_{c_{\tone}(v)}$. This reduces our problem of computing $\recht{PMC}(\hat{\Lambda})$ to computing $\tau_{\recht{m}}(\Sigma_{\Lambda})$ bottom-up.
\item[\ref{ssec:subdiv}:] We formally state and prove the assertions about the derived scenarios $\hat{\Lambda}^{\tzero}$, $\hat{\Lambda}^{\tone}$ of Section \ref{sec:BU}. This allows us to propagate strategies upward in the BDD. 
\item[\ref{ssec:fobdd}:] We use these results to prove Theorem \ref{thm:mainmax} for the full ordered BDD (FOBDD), the BDD representing $\varphi$ of maximal size.
\item[\ref{ssec:reduce}] Finally, we show how reducing the FOBDD to the ROBDD does not affect calculation, thus proving Theorem \ref{thm:mainmax}.
\end{description}

\subsection{Operations on $\mathcal{D}$} \label{ssec:commute}

In this subsection, we prove that the operations on subsets of $\mathcal{D}$ used in Theorem \ref{thm:mainmax} commute with the Pareto front operator in a suitable sense. The main result is Propositions \ref{prop:commutef} and \ref{prop:commutea}. We first define the following operators:
\begin{align*}
\psi\left(\vvec{p_1\\c_1},\vvec{p_2\\c_2},p\right) &= \vvec{(1-p)p_1+pp_2\\\max(c_1,c_2)} && \text{for $\vvec{p_1\\c_1},\vvec{p_2\\c_2} \in \mathcal{D}$ and $p \in [0,1]$,}\\
\Psi_{\recht{F}}(I_1,I_2,p) &= \{\psi(d_1,d_2,p) \mid d_1 \in I_1, d_2 \in I_2\}&& \text{for $I_1, I_2 \subseteq \mathcal{D}$ and $p \in [0,1]$,}\\
\Psi_{\recht{D}}(I_1,I_2,c) &= I_1 \cup \left\{\vvec{p_2\\c_2+c} \ \middle|\ \vvec{p_2\\c_2} \in I_2\right\}&& \text{for $I_1, I_2 \subseteq \mathcal{D}$ and $a \in [0,\infty]$.}
\end{align*}

Note that $\psi$ is monotonous w.r.t. $\sqsubseteq$ in its first two arguments.

\begin{lemma} \label{lem:pfcap}
If $I_1 \subseteq I_2 \subseteq \mathcal{D}$, then $\PF(I_2) \cap I_1 \subseteq \PF(I_1)$.
\end{lemma}

\begin{proof}
Let $d \in \PF(I_2) \cap I_1$. Since $d \in I_1$, there exists a $d' \in \PF(I_1)$ such that $d' \sqsubseteq d$. Since $d' \in I_2$, there exists a $d'' \in \PF(I_2)$ such that $d'' \sqsubseteq d' \sqsubseteq d$. Since both $d$ and $d''$ are Pareto optimal in $I_2$, these $\sqsubseteq$ must be equalities, so $d = d' \in \PF(I_1)$.
\end{proof}

\begin{proposition} \label{prop:commutef}
For $I_1,I_2 \subseteq \mathcal{D}$ and $p \in [0,1]$ one has $\PF(\Psi(\PF(I_1),I_2,p)) = \PF(\Psi(I_1,I_2,p)) = \PF(\Psi(I_1,\PF(I_2),p))$.
\end{proposition}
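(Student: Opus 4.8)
The plan is to build everything on the single fact noted just before the proposition, that $\psi$ is monotone with respect to $\sqsubseteq$ in its first two arguments, together with Lemma~\ref{lem:pfcap}. Since $\Psi$ is not symmetric in its two set-arguments (the first coordinate of $\psi$ is the asymmetric mixture $(1-p)p_1+pp_2$), the two stated equalities are genuinely separate claims; however they are proved by the same template, using monotonicity in the first argument for $\PF(\Psi(\PF(I_1),I_2,p))=\PF(\Psi(I_1,I_2,p))$ and in the second argument for the other equality. I would therefore prove only the first equality in full and remark that the second is verbatim after interchanging the roles of the arguments. Throughout I would use that on a finite set $I$ the Pareto front is \emph{dominating}: every $d\in I$ admits some $d^{\star}\in\PF(I)$ with $d^{\star}\sqsubseteq d$. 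This holds for all sets arising in the bottom-up computation since they are finite, and it is the property that lets me replace an arbitrary element of $I_1$ by a Pareto-optimal one.

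First I would set $J=\Psi(I_1,I_2,p)$ and $J'=\Psi(\PF(I_1),I_2,p)$, observe that $\PF(I_1)\subseteq I_1$ gives $J'\subseteq J$, and reduce the goal to $\PF(J)=\PF(J')$. For the inclusion $\PF(J)\subseteq\PF(J')$ I would take $d\in\PF(J)$, write $d=\psi(d_1,d_2,p)$ with $d_1\in I_1,\,d_2\in I_2$, pick $\tilde d_1\in\PF(I_1)$ with $\tilde d_1\sqsubseteq d_1$, and apply monotonicity to get $\psi(\tilde d_1,d_2,p)\sqsubseteq d$ with $\psi(\tilde d_1,d_2,p)\in J'\subseteq J$. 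Pareto optimality of $d$ in $J$ forces this weak domination to be an equality, so $d\in J'$; then $d\in\PF(J)\cap J'$ and Lemma~\ref{lem:pfcap} (with $J'\subseteq J$) yields $d\in\PF(J')$.

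For the reverse inclusion $\PF(J')\subseteq\PF(J)$ I would argue by contradiction: given $d\in\PF(J')$ not Pareto-optimal in $J$, there is $d^{*}\in J$ with $d^{*}\sqsubset d$; writing $d^{*}=\psi(d_1,d_2,p)$ and choosing $\tilde d_1\in\PF(I_1)$ with $\tilde d_1\sqsubseteq d_1$, monotonicity gives $\psi(\tilde d_1,d_2,p)\sqsubseteq d^{*}\sqsubset d$, and since composing a weak domination with a strict one is strict (because $\sqsubseteq$ is a partial order), $\psi(\tilde d_1,d_2,p)\in J'$ strictly dominates $d$, contradicting $d\in\PF(J')$. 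Combining the two inclusions proves the first equality, and the second follows by the same argument using monotonicity of $\psi$ in its second argument.

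The work here is almost entirely bookkeeping, so there is no single hard obstacle; the two points requiring care are that the monotone image under $\psi$ of a dominating point is again dominating (immediate from the stated monotonicity, but it is what transports Pareto optimality between $J$ and $J'$), and that $\sqsubseteq$ followed by $\sqsubset$ yields $\sqsubset$, which is what keeps the contradiction step strict. I would also flag explicitly the finiteness assumption underlying the ``dominating'' property, since for a general infinite $I\subseteq\mathcal{D}$ the Pareto front need not dominate every element; in our setting all the sets $I_1,I_2$ are finite, so this is harmless.
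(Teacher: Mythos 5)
Your proof is correct and follows essentially the same route as the paper's: both directions rest on the monotonicity of $\psi$, the replacement of an arbitrary element of $I_1$ by a Pareto-optimal element dominating it, and Lemma~\ref{lem:pfcap} applied to $\Psi(\PF(I_1),I_2,p)\subseteq\Psi(I_1,I_2,p)$. Your explicit remark that the ``dominating'' property of $\PF$ requires finiteness is a point the paper leaves implicit, but it does not change the argument.
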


\begin{proof}
We only prove the first equality as the other is analogous. Let $d_1 \in \PF(I_1)$, $d_2 \in I_2$ be such that $\psi(d_1,d_2,p) \in \PF(\Psi_{\recht{F}}(\PF(I_1),I_2,p))$. Let $d_1' \in I_1$, $d_2' \in I_2$ be such that $\psi(d_1',d_2',p) \sqsubseteq \psi(d_1,d_2,p)$. Let $d_1'' \in \PF(I_1)$ be such that $d_1'' \sqsubseteq d_1'$; then because $\psi$ is monotonous we get
\[
\psi(d_1'',d_2',p) \sqsubseteq \psi(d_1',d_2',p) \sqsubseteq \psi(d_1,d_2,p).\]
Since $\psi(d_1'',d_2',p) \in \Psi_{\recht{F}}(\PF(I_1),I_2,p)$ and $\psi(d_1,d_2,p)$ was assumed to be Pareto-optimal in this set, both $\sqsubseteq$ above must be equalities. In particular the case $\psi(d_1',d_2',p) \sqsubset \psi(d_1,d_2,p)$ is impossible, which shows $\psi(d_1,d_2,p) \in \PF(\Psi_{\recht{F}}(I_1,I_2,p))$. This shows ``$\subseteq$''.

Now let $d_1 \in I_1$, $d_2 \in I_2$ be such that $\psi(d_1,d_2,p) \in \PF(\Psi_{\recht{F}}(I_1,I_2,p))$. Let $d_1' \in \PF(I_1)$ such that $d_1' \sqsubseteq d_1$. Then $\psi(d_1',d_2,p) \sqsubseteq \psi(d_1,d_2,p)$, but since the latter was assumed to be Pareto optimal, this is an equality; hence by replacing $d_1$ by $d_1'$ if necessary, we may assume that $d_1 \in \PF(I_1)$, so $\psi(d_1,d_2,p) \in \Psi_{\recht{F}}(\PF(I_1),I_2,p)$. Applying Lemma \ref{lem:pfcap} to $\Psi_{\recht{F}}(I_1,I_2,p) \supseteq \Psi_{\recht{F}}(\PF(I_1),I_2,p)$, we find $\psi(d_1,d_2,p) \in \PF(\Psi_{\recht{F}}(\PF(I_1),I_2,p))$. This shows ``$\supseteq$'' and completes the proof.
\end{proof}

\begin{lemma} \label{lem:pfcup}
For $I_1,I_2 \subseteq \mathcal{D}$ one has $\PF(I_1 \cup I_2) = \PF(\PF(I_1) \cup I_2)$.
\end{lemma}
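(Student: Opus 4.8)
The plan is to derive both inclusions from Lemma~\ref{lem:pfcap}, together with the monotonicity property of Pareto fronts already invoked implicitly in its proof, namely that every element of a set is dominated by some Pareto-optimal element of that set. The key structural observation is the pair of inclusions $\PF(I_1)\cup I_2 \subseteq I_1\cup I_2$ and $I_1 \subseteq I_1\cup I_2$, which allow us to apply Lemma~\ref{lem:pfcap} in two different ways.

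For the inclusion $\PF(I_1\cup I_2)\subseteq \PF(\PF(I_1)\cup I_2)$, I would first show that any $d\in \PF(I_1\cup I_2)$ already lies in $\PF(I_1)\cup I_2$. If $d\in I_2$ this is immediate; and if $d\in I_1$, then applying Lemma~\ref{lem:pfcap} to $I_1\subseteq I_1\cup I_2$ gives $\PF(I_1\cup I_2)\cap I_1\subseteq \PF(I_1)$, so $d\in \PF(I_1)$. Once $d\in \PF(I_1)\cup I_2$ is established, a second application of Lemma~\ref{lem:pfcap}, this time to $\PF(I_1)\cup I_2 \subseteq I_1\cup I_2$, yields $\PF(I_1\cup I_2)\cap(\PF(I_1)\cup I_2)\subseteq \PF(\PF(I_1)\cup I_2)$, and hence $d\in \PF(\PF(I_1)\cup I_2)$.

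For the reverse inclusion, I would take $d\in \PF(\PF(I_1)\cup I_2)$ and argue directly that no element of $I_1\cup I_2$ strictly dominates it. Suppose $d'\sqsubset d$ with $d'\in I_1\cup I_2$. If $d'\in I_2$, then $d'\in \PF(I_1)\cup I_2$ contradicts the Pareto-optimality of $d$. If $d'\in I_1$, then choosing $d''\in\PF(I_1)$ with $d''\sqsubseteq d'$ gives $d''\sqsubset d$ with $d''\in \PF(I_1)\cup I_2$, again a contradiction. Since moreover $d\in \PF(I_1)\cup I_2\subseteq I_1\cup I_2$, this shows $d\in \PF(I_1\cup I_2)$.

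The only delicate point is the step in the reverse inclusion where an element $d'\in I_1$ is replaced by a dominating Pareto-optimal $d''\in \PF(I_1)$: this relies on every element of $I_1$ admitting a Pareto-optimal element below it. This is precisely the property already used in the proof of Lemma~\ref{lem:pfcap}, and it holds here because all sets arising in the bottom-up BDD computation are finite, so no new obstacle is introduced. Everything else reduces to bookkeeping with the two inclusions and repeated applications of Lemma~\ref{lem:pfcap}.
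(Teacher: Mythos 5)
Your proof is correct and follows essentially the same route as the paper's: the forward inclusion via two applications of Lemma~\ref{lem:pfcap} (to $I_1 \subseteq I_1 \cup I_2$ and to $\PF(I_1) \cup I_2 \subseteq I_1 \cup I_2$) is identical, and your reverse inclusion uses the same key step of replacing a dominating $d' \in I_1$ by a Pareto-optimal $d'' \in \PF(I_1)$ below it, merely phrased as a contradiction rather than as the paper's case split on $d \in \PF(I_1)$ versus $d \in I_2$. Your explicit caveat about finiteness applies equally to the paper's own argument, so no new gap is introduced.
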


\begin{proof}
The proof consists of the following four statements:
\begin{enumerate}
    \item \emph{If $d \in I_1 \cap \PF(I_1 \cup I_2)$, then $d \in \PF(\PF(I_1) \cup I_2)$:} Applying Lemma \ref{lem:pfcap} on $I_1 \subseteq I_1 \cup I_2$ we get $d \in \PF(I_1)$, so $d \in \PF(I_1) \cup I_2$. Applying Lemma \ref{lem:pfcap} again on $\PF(I_1) \cup I_2 \subseteq I_1 \cup I_2$, we get $d \in \PF(\PF(I_1) \cup I_2)$.
    \item \emph{If $d \in I_2 \cap \PF(I_1 \cup I_2)$, then $d \in \PF(\PF(I_1) \cup I_2)$:} Apply Lemma \ref{lem:pfcap} to $\PF(I_1) \cup I_2 \subseteq I_1 \cup I_2$.
    \item \emph{If $d \in \PF(I_1) \cap \PF(\PF(I_1) \cup I_2)$, then $d \in \PF(I_1 \cup I_2)$:} Let $d' \in \PF(I_1 \cup I_2)$ be such that $d' \sqsubseteq d$. If $d' \in I_1$, then $d'= d$ as $d$ is Pareto optimal in $I_1$. If $d' \in I_2$, then $d' = d$ as $d$ is Pareto optimal in $\PF(I_1) \cup I_2$. Either way we find $d \in \PF(I_1 \cup I_2)$.
    \item \emph{If $d \in I_2 \cap \PF(\PF(I_1) \cup I_2)$, then $d \in \PF(I_1 \cup I_2)$:} Let $d' \in \PF(I_1 \cup I_2)$ be such that $d' \sqsubseteq d$. If $d' \in I_1$, then let $d'' \in \PF(I_1)$ be such that $d'' \sqsubseteq d'$. Then $d'' \subseteq d' \sqsubseteq d$, but since $d$ is Pareto optimal in $\PF(I_1) \cup I_2$, we find $d'' = d$, hence $d = d'$. If $d' \in I_2$, we also find $d = d'$ since $d$ is Pareto-optimal in $\PF(I_1) \cup I_2$. Hence we conclude that $d \in \PF(I_1 \cup I_2)$.
\end{enumerate}
\end{proof}

\begin{proposition} \label{prop:commutea}
For $I_1, I_2 \subseteq \mathcal{D}$ and $c \in [0,\infty]$ one has $\PF(\Psi_{\recht{A}}(I_1,I_2,c)) = \PF(\Psi_{\recht{A}}(\PF(I_1),I_2,c)) = \PF(\Psi_{\recht{A}}(I_1,\PF(I_2),c))$.
\end{proposition}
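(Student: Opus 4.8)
The plan is to recognize that $\Psi_{\recht{A}}$ is assembled from a union and a cost-shift, and then to dispatch the two equalities separately using the two union-lemmas already established. Write $S_c(I) = \cset{\vvec{p\\c'+c}}{\vvec{p\\c'} \in I}$ for the map that adds $c$ to every cost coordinate, so that $\Psi_{\recht{A}}(I_1,I_2,c) = I_1 \cup S_c(I_2)$. The first fact I would record is that $S_c$ is monotone with respect to $\sqsubseteq$: if $\vvec{p_1\\c_1} \sqsubseteq \vvec{p_2\\c_2}$, i.e. $p_1 \geq p_2$ and $c_1 \leq c_2$, then $c_1 + c \leq c_2 + c$ and hence $S_c\vvec{p_1\\c_1} \sqsubseteq S_c\vvec{p_2\\c_2}$; this stays valid for $c = \infty$ under the convention $x + \infty = \infty$. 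With this in hand the first equality $\PF(\Psi_{\recht{A}}(I_1,I_2,c)) = \PF(\Psi_{\recht{A}}(\PF(I_1),I_2,c))$ is immediate: it reads $\PF(I_1 \cup S_c(I_2)) = \PF(\PF(I_1) \cup S_c(I_2))$, which is exactly Lemma \ref{lem:pfcup} applied to the pair $(I_1, S_c(I_2))$.

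For the second equality $\PF(I_1 \cup S_c(I_2)) = \PF(I_1 \cup S_c(\PF(I_2)))$, the heart of the matter is that replacing $I_2$ by $\PF(I_2)$ inside the shift discards no Pareto-optimal point of the whole union. I would isolate the key inclusion $\PF(I_1 \cup S_c(I_2)) \subseteq I_1 \cup S_c(\PF(I_2))$ and prove it directly: a Pareto-optimal point $d$ of $I_1 \cup S_c(I_2)$ either lies in $I_1$, in which case we are done, or equals $S_c(e)$ for some $e \in I_2$. In the latter case choose $e^{\ast} \in \PF(I_2)$ with $e^{\ast} \sqsubseteq e$ (such an element exists since the sets are finite, which is the same fact already used in the proof of Lemma \ref{lem:pfcap}). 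Monotonicity of $S_c$ gives $S_c(e^{\ast}) \sqsubseteq S_c(e) = d$ with $S_c(e^{\ast}) \in S_c(I_2) \subseteq I_1 \cup S_c(I_2)$, and Pareto-optimality of $d$ forces $S_c(e^{\ast}) = d$, so $d \in S_c(\PF(I_2))$. Writing $K = I_1 \cup S_c(I_2)$ and $J = I_1 \cup S_c(\PF(I_2))$, we now have $\PF(K) \subseteq J \subseteq K$, from which the equality of fronts follows by the standard sandwich argument: Lemma \ref{lem:pfcap} (with $I_1 = J \subseteq I_2 = K$) gives $\PF(K) = \PF(K) \cap J \subseteq \PF(J)$, while for the reverse inclusion any $d \in \PF(J)$ that were strictly dominated in $K$ could be dominated by some $d'' \in \PF(K) \subseteq J$, contradicting $d \in \PF(J)$; this is the descent pattern of items (3)--(4) in the proof of Lemma \ref{lem:pfcup}.

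The step I expect to be the main obstacle is precisely this second equality, and within it the degenerate case $c = \gamma_a = \infty$. There the shift $S_\infty$ collapses every cost to $\infty$ and is no longer injective, so one cannot simply invoke ``$\PF$ commutes with the order-isomorphism $S_c$'', which would be the one-line argument available for finite $c$. The dominance argument above is designed to sidestep injectivity: it uses only that $S_c$ is monotone (which survives $c = \infty$) together with finiteness of the sets to guarantee that every element dominates a Pareto-optimal one. Everything else --- reducing the claim to the single inclusion and upgrading that inclusion to an equality of Pareto fronts --- is routine bookkeeping with Lemmas \ref{lem:pfcap} and \ref{lem:pfcup}.
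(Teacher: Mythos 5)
Your proof is correct, and its overall architecture matches the paper's: both decompose $\Psi_{\recht{A}}(I_1,I_2,c)$ as $I_1 \cup S_c(I_2)$, dispatch the $\PF(I_1)$-equality via Lemma \ref{lem:pfcup}, and use monotonicity of the cost shift for the $\PF(I_2)$-equality. The one substantive difference is in how that second equality is handled, and here your version is actually the more careful one. The paper's proof rests on the displayed identity $\PF(S_c(I_2)) = S_c(\PF(I_2))$, justified only by ``addition is monotonous''; as you anticipated, this identity fails as stated when $c = \infty$, since $S_\infty$ collapses all costs and, e.g., for $I_2 = \left\{\vvec{0.5\\1},\vvec{0.9\\10}\right\}$ one gets $\PF(S_\infty(I_2)) = \left\{\vvec{0.9\\\infty}\right\}$ while $S_\infty(\PF(I_2))$ has two elements. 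The proposition itself survives because only the weaker sandwich $\PF(S_c(I_2)) \subseteq S_c(\PF(I_2)) \subseteq S_c(I_2)$ is needed once everything is fed back through $\PF(I_1 \cup \cdot\,)$, and that sandwich is exactly what your direct dominance argument establishes. So your route buys a proof that is uniformly valid over $c \in [0,\infty]$ without case-splitting, at the cost of a slightly longer argument than the paper's two-line appeal to Lemma \ref{lem:pfcup}; the paper's route is shorter but, read literally, needs the repair you supply for the degenerate case.
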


\begin{proof}
Since addition is monotonous, one has 
\[
    \PF\left\{\vvec{p_2\\c_2+c} \ \middle|\ \vvec{p_2\\c_2} \in I_2\right\} = \left\{\vvec{p_2\\c_2+c} \ \middle|\ \vvec{p_2\\c_2} \in \PF(I_2)\right\}.
\]
Using Lemma \ref{lem:pfcup} we then get
\begin{align*}
\PF(\Psi_{\recht{A}}(I_1,I_2,c)) &= \PF\left(I_1 \cup \left\{\vvec{p_2\\c_2+c} \ \middle|\ \vvec{p_2\\c_2} \in I_2\right\} \right)\\
&= \PF\left(I_1 \cup \PF\left\{\vvec{p_2\\c_2+c} \ \middle|\ \vvec{p_2\\c_2} \in I_2\right\} \right)\\
&= \PF\left(I_1 \cup \left\{\vvec{p_2\\c_2+c} \ \middle|\ \vvec{p_2\\c_2} \in \PF(I_2)\right\} \right)\\
&= \PF(\Psi_{\recht{A}}(I_1,\PF(I_2),c)).
\end{align*}
The proof that $\PF(\Psi_{\recht{A}}(I_1,I_2,c)) = \PF(\Psi_{\recht{A}}(\PF(I_1),I_2,c))$ is analogous.
\end{proof}

\subsection{Scenario reductions} \label{ssec:subdiv}

In this section we create, from a scenario $\Lambda = (F,A,\varphi,Q)$, new reduced scenarios $\Lambda^{\tzero},\Lambda^{\tone}$ by substituting $\tzero$ or $\tone$ for the lowest-ranked variable in $\varphi$ (according to $\prec_Q$). It turns out that the strategies of $\Lambda$, and their probabilities and costs, can be expressed in terms of the strategies of $\Lambda^{\tzero}$ and $\Lambda^{\tone}$. This forms the basis of the induction proof of Theorem \ref{thm:mainmax}. The main results of this subsection are Propositions \ref{prop:tauf} and \ref{prop:taua}. We first define reduced scenarios. For a Boolean function $\varphi\colon \BB^F \times \BB^A \rightarrow \BB$, and $f \in F$ we write $\varphi[x_f/\tzero]$ for the function $\BB^{F\setminus\{f\}} \times \BB^A \rightarrow \BB$ obtained by substituting $\tzero$ for the $f$-labeled variable in $\varphi$. We likewise define $\varphi[x_f/\tone]$ and, for $a \in A$, $\varphi[y_a/\tzero]$ and $\varphi[y_a/\tone]$.

\begin{definition}
Let $\hat{\Lambda} = (F,A,\varphi,Q,\pi,\gamma)$ be a quantified scenario.
\begin{enumerate}
\item If there exists a $f \in F$ that is minimal w.r.t. $\prec_Q$ (i.e., $f \in Q_a$ for all $a$), then we define the quantified scenarios $\hat{\Lambda}^{f/\tzero}, \hat{\Lambda}^{f/\tone}$ by
\begin{align*}
F^{f/\tzero} = F^{f/\tone} &= F \setminus \{f\},& A^{f/\tzero} = A^{f/\tone} &= A,\\
\varphi^{f/\tzero} &= \varphi[x_f/\tzero], & \varphi^{f/\tone} &= \varphi[x_f/\tone],\\
Q^{f/\tzero} = Q^{f/\tone} &= (a \mapsto (Q_a \setminus \{f\})), & \pi^{f/\tzero} = \pi^{f/\tone} &= \pi|_{F\setminus \{f\}},\\
\gamma^{f/\tzero} = \gamma^{f/\tone} &= \gamma.
\end{align*}
\item If there exists a $a \in A$ that is minimal w.r.t. $\prec_Q$ (i.e., if $Q_a = \varnothing$), then define the quantified scenarios $\hat{\Lambda}^{a/\tzero},\hat{\Lambda}^{a/\tone}$ by
\begin{align*}
F^{a/\tzero} = F^{a/\tone} &= F,& A^{a/\tzero} = A^{a/\tone} &= A\setminus\{a\},\\
\varphi^{a/\tzero} &= \varphi[y_a/\tzero], & \varphi^{a/\tone} &= \varphi[y_a/\tone],\\
Q^{a/\tzero} = Q^{a/\tone} &= (a' \mapsto Q_{a'}), & \pi^{a/\tzero} = \pi^{a/\tone} &= \pi,\\
\gamma^{a/\tzero} = \gamma^{a/\tone} &= \gamma|_{A\setminus \{a\}}.
\end{align*}
\end{enumerate}
\end{definition}

Assume there is a minimal $f \in F$ w.r.t. $\prec_Q$. We now discuss how (the probabilities and costs of) the strategies of $\Lambda^{f/\tzero},\Lambda^{f/\tone}$ relate to those of $\Lambda$. Note that if we have strategies $\sigma^{\tzero}$ for $\Lambda^{f/\tzero}$ and $\sigma^{\tone}$ for $\Lambda^{f/\tone}$, then we can compose them into a strategy $\sigma^{\tzero} \star \sigma^{\tone}$ for $\Lambda$, by first observing $\mathsf{x}_f$, and then following $\sigma^{\tzero}$ if $\mathsf{x}_f = \tzero$, and $\sigma^{\tone}$ if $\mathsf{x}_f = \tone$. This is possible since we assume $f$ to be minimal w.r.t. $\prec_Q$. The following result states that every strategy of $\Lambda$ is obtained in this way.

\begin{lemma}
Let $\Lambda = (F,A,\varphi,Q)$ be a scenario, and suppose there exists a $f \in F$ that is minimal in $F \cup A$ w.r.t. $\prec_Q$. Then there exists a bijection $\star\colon \Sigma_{\Lambda^{f/\tzero}} \times \Sigma^{\Lambda_{f/\tone}} \rightarrow \Sigma_{\Lambda}$ defined, for all $a \in A$ and $x \in \BB^{Q_a}$, by
\[
(\sigma^{\tzero} \star \sigma^{\tone})_a(x) = (\neg x_f \wedge \sigma^0_a(x')) \vee (x_f \wedge \sigma^{\tone}_a(x')),
\]
where $x' \in \BB^{Q_{a} \setminus \{f\}}$ satisfies $x'_{f'} = x_{f'}$ for all $f' \in Q_a \setminus \{f\}$.
\end{lemma}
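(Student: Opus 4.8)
The plan is to show that $\star$ is a bijection by exhibiting its inverse explicitly, exploiting the fact that, since $f$ is minimal w.r.t. $\prec_Q$, we have $f \in Q_a$ for every $a \in A$. Consequently, for each $a$ the cube $\BB^{Q_a}$ factors as $\BB^{\{f\}} \times \BB^{Q_a \setminus \{f\}}$, and a map $\sigma_a\colon \BB^{Q_a} \rightarrow \BB$ is completely determined by its two restrictions to the slices $x_f = \tzero$ and $x_f = \tone$. The whole statement is then the product over $a \in A$ of the elementary bijection $\mathrm{Map}(\BB^{Q_a},\BB) \cong \mathrm{Map}(\BB^{Q_a \setminus \{f\}},\BB)^2$, together with the observation that $Q^{f/\tzero}_a = Q^{f/\tone}_a = Q_a \setminus \{f\}$ and $A^{f/\tzero} = A^{f/\tone} = A$, so that $\Sigma_{\Lambda^{f/\tzero}} = \Sigma_{\Lambda^{f/\tone}} = \prod_{a \in A} \mathrm{Map}(\BB^{Q_a \setminus \{f\}}, \BB)$.

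First I would verify that $\star$ is well-defined, i.e. that $\sigma^{\tzero} \star \sigma^{\tone}$ is a genuine strategy for $\Lambda$. For each $a \in A$ and $x \in \BB^{Q_a}$, the coordinate $x_f$ is defined because $f \in Q_a$, and the restriction $x' \in \BB^{Q_a \setminus \{f\}}$ is well-defined; hence $\sigma^{\tzero}_a(x')$ and $\sigma^{\tone}_a(x')$ lie in $\BB$, and the defining formula yields an element of $\BB$. Thus $(\sigma^{\tzero} \star \sigma^{\tone})_a$ is indeed a map $\BB^{Q_a} \rightarrow \BB$, as a strategy for $\Lambda$ requires.

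Next I would construct the candidate inverse. Given a strategy $\sigma \in \Sigma_{\Lambda}$, for each $a \in A$ and $x' \in \BB^{Q_a \setminus \{f\}}$ let $x'^{(\tzero)}, x'^{(\tone)} \in \BB^{Q_a}$ be the two extensions of $x'$ obtained by setting the $f$-coordinate to $\tzero$ and $\tone$ respectively, and define $\sigma^{\tzero}_a(x') = \sigma_a(x'^{(\tzero)})$ and $\sigma^{\tone}_a(x') = \sigma_a(x'^{(\tone)})$. By the remark above these are strategies for $\Lambda^{f/\tzero}$ and $\Lambda^{f/\tone}$. I would then check that $\sigma \mapsto (\sigma^{\tzero},\sigma^{\tone})$ is a two-sided inverse of $\star$. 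The crux is the Boolean identity $(\neg b \wedge u) \vee (b \wedge v)$, which equals $u$ when $b = \tzero$ and $v$ when $b = \tone$: evaluating $\sigma^{\tzero} \star \sigma^{\tone}$ at an $x$ with $x_f = \tzero$ recovers $\sigma^{\tzero}_a(x')$, and at $x_f = \tone$ recovers $\sigma^{\tone}_a(x')$, giving $\star$ followed by the inverse equals the identity; the reverse composition is the dual computation.

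I do not anticipate a genuine obstacle, as this is a routine verification. The only point requiring care is the bookkeeping with restrictions and extensions of the argument vectors: one must keep track that the $f$-coordinate is always present (guaranteed by the minimality of $f$), that $x'$ is precisely the $Q_a \setminus \{f\}$-part of $x$, and that the two constructions are mutually inverse coordinatewise in $a \in A$. Since bijectivity is preserved under products, assembling the per-$a$ bijections yields the desired bijection $\star$.
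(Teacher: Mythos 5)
Your proposal is correct and follows essentially the same route as the paper: the paper's proof consists precisely of exhibiting the inverse $\sigma \mapsto (\sigma_a[x_f/\tzero],\sigma_a[x_f/\tone])_{a \in A}$, i.e.\ restricting each $\sigma_a$ to the two slices $x_f = \tzero$ and $x_f = \tone$, which is exactly your candidate inverse. Your additional bookkeeping (well-definedness via $f \in Q_a$ from minimality, and the two-sided-inverse check through the Boolean identity) just makes explicit what the paper leaves to the reader.
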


\begin{proof}
The inverse is given by $\sigma \mapsto (\sigma_a[x_f/\tzero],\sigma_a[x_f/\tone])_{a \in A}$.
\end{proof}

Our next result shows that the probability and cost of $\sigma^{\tzero} \star \sigma^{\tone}$ can be expressed in those of $\sigma^{\tzero}$ and $\sigma^{\tone}$.

\begin{lemma}
    For $\sigma^{\tzero} \in \Sigma_{\Lambda^{f/\tzero}}, \sigma^{\tone} \in \Sigma^{\Lambda_{f/\tone}}$ one has
    \begin{align*}
    \hat{\pi}(\sigma^{\tzero} \star \sigma^{\tone}) &= (1-\pi_f)\hat{\pi}^{f/\tzero}(\sigma^{\tzero})+\pi_f\hat{\pi}^{f/\tone}(\sigma^{\tone}),\\
    \hat{\gamma}_{\recht{m}}(\sigma^{\tzero} \star \sigma^{\tone}) &= \max(\hat{\gamma}_{\recht{m}}^{f/\tzero}(\sigma^{\tzero}),\hat{\gamma}_{\recht{m}}^{f/\tone}(\sigma^{\tone})).
    \end{align*}
    \end{lemma}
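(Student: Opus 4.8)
The plan is to reduce both identities to a single structural fact about the induced map of the composite strategy, and then read off the probability and the maximal cost directly from their definitions.

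First I would establish the key relation between induced maps. For $x \in \mathcal{X} = \BB^F$ write $x' \in \BB^{F \setminus \{f\}}$ for its restriction. Unfolding the formula for $\star$ from the previous lemma together with the definition $\hat{\sigma}(x)_a = \sigma_a((x_g)_{g \in Q_a})$, and using that $f \in Q_a$ for every $a$ (minimality of $f$, so that $x_f$ is always observed), I would show
\[
\widehat{\sigma^{\tzero} \star \sigma^{\tone}}(x) = \begin{cases} \widehat{\sigma^{\tzero}}(x'), & \text{if } x_f = \tzero,\\ \widehat{\sigma^{\tone}}(x'), & \text{if } x_f = \tone.\end{cases}
\]
Indeed, when $x_f = \tzero$ the disjunct $x_f \wedge \sigma^{\tone}_a(x')$ vanishes and $\neg x_f \wedge \sigma^{\tzero}_a(x')$ collapses to $\sigma^{\tzero}_a(x')$, and symmetrically for $x_f = \tone$. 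This is the crux of the argument; once it is set up correctly the rest is bookkeeping. The only subtlety is matching domains: the component map $\sigma^{\tzero}_a$ takes input in $\BB^{Q_a \setminus \{f\}} = \BB^{Q^{f/\tzero}_a}$, which is exactly the restriction of $x'$, so the induced map of the reduction is evaluated on the correct coordinates.

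For the probability I would condition on $\mathsf{x}_f$. Since the coefficients $\mathsf{x}_g$ are independent Bernoulli variables, $\mathsf{x}_f = \tzero$ with probability $1 - \pi_f$ and $\mathsf{x}_f = \tone$ with probability $\pi_f$, and conditionally on either event the remaining vector $\mathsf{x}'$ is distributed exactly as the probability vector $\pi|_{F \setminus \{f\}} = \pi^{f/\tzero} = \pi^{f/\tone}$ prescribes. Using the induced-map relation above together with $\varphi(x,y) = \varphi[x_f/\tzero](x',y)$ when $x_f = \tzero$ and $\varphi(x,y) = \varphi[x_f/\tone](x',y)$ when $x_f = \tone$, the event $\varphi(\mathsf{x}, \widehat{\sigma^{\tzero} \star \sigma^{\tone}}(\mathsf{x})) = \tone$ becomes, on $\{\mathsf{x}_f = \tzero\}$, the event $\varphi^{f/\tzero}(\mathsf{x}', \widehat{\sigma^{\tzero}}(\mathsf{x}')) = \tone$, whose probability is $\hat{\pi}^{f/\tzero}(\sigma^{\tzero})$, and analogously on $\{\mathsf{x}_f = \tone\}$. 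The law of total probability then yields the claimed convex combination $(1-\pi_f)\hat{\pi}^{f/\tzero}(\sigma^{\tzero}) + \pi_f \hat{\pi}^{f/\tone}(\sigma^{\tone})$.

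For the maximal cost I would partition the maximization domain $\mathcal{X} = \BB^F$ according to the value of $x_f$. Because the cost vectors and action sets agree, $\gamma^{f/\tzero} = \gamma^{f/\tone} = \gamma$ and $A^{f/\tzero} = A^{f/\tone} = A$, the cost on any input is $\sum_{a : \hat{\sigma}(x)_a = \tone} \gamma_a$ in every scenario. On the half $\{x : x_f = \tzero\}$ the induced-map relation gives cost $\sum_{a : \widehat{\sigma^{\tzero}}(x')_a = \tone} \gamma_a$, and as $x'$ ranges over all of $\BB^{F \setminus \{f\}} = \mathcal{X}^{f/\tzero}$ the maximum over this half is precisely $\hat{\gamma}_{\recht{m}}^{f/\tzero}(\sigma^{\tzero})$; the half $\{x_f = \tone\}$ gives $\hat{\gamma}_{\recht{m}}^{f/\tone}(\sigma^{\tone})$ by the same reasoning. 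Since the maximum over $\mathcal{X}$ is the larger of the two half-maxima, we obtain $\max(\hat{\gamma}_{\recht{m}}^{f/\tzero}(\sigma^{\tzero}), \hat{\gamma}_{\recht{m}}^{f/\tone}(\sigma^{\tone}))$. I expect the induced-map relation to be the only genuine obstacle, precisely because of the domain-restriction bookkeeping; after that, the probability identity is a routine conditioning argument and the cost identity a routine partition of the maximum.
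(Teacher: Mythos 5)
Your proposal is correct and follows essentially the same route as the paper's proof: establish the case split $\widehat{\sigma^{\tzero}\star\sigma^{\tone}}(x)=\widehat{\sigma^{\tzero}}(x')$ or $\widehat{\sigma^{\tone}}(x')$ according to $x_f$, then condition on $\mathsf{x}_f$ for the probability and partition the maximization domain by $x_f$ for the maximal cost. The extra care you take with the domain bookkeeping ($\BB^{Q_a\setminus\{f\}}=\BB^{Q^{f/\tzero}_a}$) is a welcome addition but not a departure from the paper's argument.
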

    
    \begin{proof}
    Let $\sigma = \sigma^{\tzero} \star \sigma^{\tone}$. Then for $x \in X$ we have
    \[
    \hat{\sigma}(x) = \begin{cases}
        \hat{\sigma}^{\tzero}(x'), & \textrm{ if $x_f = \tzero$,}\\
        \hat{\sigma}^{\tone}(x'), & \textrm{ if $x_f = \tone$.}
    \end{cases}
    \]
    Hence
    \begin{align*}
    \hat{\pi}(\sigma) &= \prob(\varphi(\mathsf{x},\hat{\sigma}(\mathsf{x})) = \tone) \\
    &= \prob(\mathsf{x}_f = \tzero)\prob(\varphi(\mathsf{x},\hat{\sigma}(\mathsf{x})) = \tone|\mathsf{x}_f = \tzero)+\prob(\mathsf{x}_f = \tone)\prob(\varphi(\mathsf{x},\hat{\sigma}(\mathsf{x})) = \tone|\mathsf{x}_f = \tone)\\
    &= (1-\pi_f)\prob(\varphi(\mathsf{x},\hat{\sigma}(\mathsf{x})) = \tone|\mathsf{x}_f = \tzero)+\pi_f\prob(\varphi(\mathsf{x},\hat{\sigma}(\mathsf{x})) = \tone|\mathsf{x}_f = \tone)\\
    &= (1-\pi_f)\prob(\varphi^{f/\tzero}(\mathsf{x}',\hat{\sigma}_{\tzero}(\mathsf{x}')) = \tone)+\pi_f\prob(\varphi^{f/\tone}(\mathsf{x}',\hat{\sigma}_{\tone}(\mathsf{x}')) = \tone)\\
    &= (1-\pi_f)\hat{\pi}^{f/\tzero}(\sigma_{\tzero})+\pi_f\hat{\pi}^{f/\tone}(\sigma_{\tone}).
    \end{align*}
    Furthermore
    \begin{align*}
    \hat{\gamma}_{\recht{m}}(\sigma) &= \max_{x \in X} \sum_{a \in A} \gamma_a\hat{\sigma}(x)_a \\
    &= \max\left(\max_{x' \in X'} \sum_{a \in A} \gamma_a\hat{\sigma}^{\tzero}(x')_a,\max_{x' \in X'} \sum_{a \in A} \gamma_a\hat{\sigma}^{\tone}(x')_a\right)\\
    &= \max(\hat{\gamma}_{\recht{m}}^{f/\tzero}(\sigma^{\tzero}),\hat{\gamma}_{\recht{m}}^{f/\tone}(\sigma^{\tone})).
    \end{align*}
    \end{proof}

Combining these two lemmas leads to the following result, where $\Psi_{\recht{F}}$ is as in Section \ref{ssec:commute}:

\begin{proposition} \label{prop:tauf}
Let $hat{\Lambda}$ be a quantified scenario such that there exists an $f \in F$ that is minimal in $F \cup A$ w.r.t. $\prec_Q$. Then
\[
\tau_{\recht{m}}(\Sigma_{\Lambda}) = \Psi_{\recht{F}}(\Sigma_{\Lambda^{f/\tzero}},\Sigma_{\Lambda^{f/\tone}},\pi_f).
\]
\end{proposition}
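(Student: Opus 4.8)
The plan is to read the statement as an equality of subsets of $\mathcal{D}$ and then deduce it directly from the two preceding lemmas. First I would fix the convention that $\tau_{\recht{m}}(\Sigma_{\Lambda})$ abbreviates the image set $\{\me(\sigma) \mid \sigma \in \Sigma_{\Lambda}\}$, and that on the right-hand side the reduced strategy sets are likewise passed through the corresponding metric maps $\me^{f/\tzero}$ and $\me^{f/\tone}$ (where $\me^{f/\tzero}(\sigma^{\tzero}) = \vvec{\hat{\pi}^{f/\tzero}(\sigma^{\tzero})\\\hat{\gamma}_{\recht{m}}^{f/\tzero}(\sigma^{\tzero})}$, and analogously for $\tone$). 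With this reading the claim becomes the assertion that the maps $\me \circ \star$ and $\psi \circ (\me^{f/\tzero} \times \me^{f/\tone})$ have the same image.

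The core computation is to verify this agreement pointwise. Fix $\sigma^{\tzero} \in \Sigma_{\Lambda^{f/\tzero}}$ and $\sigma^{\tone} \in \Sigma_{\Lambda^{f/\tone}}$, and set $d_1 = \me^{f/\tzero}(\sigma^{\tzero})$, $d_2 = \me^{f/\tone}(\sigma^{\tone})$. The cost/probability lemma expresses $\hat{\pi}(\sigma^{\tzero} \star \sigma^{\tone})$ as the convex combination $(1-\pi_f)\hat{\pi}^{f/\tzero}(\sigma^{\tzero}) + \pi_f \hat{\pi}^{f/\tone}(\sigma^{\tone})$ of the first coordinates of $d_1, d_2$, and $\hat{\gamma}_{\recht{m}}(\sigma^{\tzero} \star \sigma^{\tone})$ as the maximum of their second coordinates. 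Comparing with the definition of $\psi$ from Section \ref{ssec:commute}, this is precisely $\me(\sigma^{\tzero} \star \sigma^{\tone}) = \psi(d_1, d_2, \pi_f)$.

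To finish, I would invoke the bijection lemma, which says $\star$ maps $\Sigma_{\Lambda^{f/\tzero}} \times \Sigma_{\Lambda^{f/\tone}}$ onto $\Sigma_{\Lambda}$. Surjectivity guarantees that every point of $\tau_{\recht{m}}(\Sigma_{\Lambda})$ equals $\psi(d_1, d_2, \pi_f)$ for some $d_1, d_2$ in the two reduced image sets, and the fact that each $\sigma^{\tzero} \star \sigma^{\tone}$ lies in $\Sigma_{\Lambda}$ gives the reverse containment; unfolding the definition of $\Psi_{\recht{F}}$ over the product then yields the claimed equality. I do not expect a genuine obstacle here: both inclusions are immediate once the pointwise identity $\me \circ \star = \psi \circ (\me^{f/\tzero} \times \me^{f/\tone})$ and the bijectivity of $\star$ are in hand. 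The only care needed is bookkeeping — confirming the image-set reading of $\tau_{\recht{m}}(\cdot)$, and checking that the weights $(1-\pi_f, \pi_f)$ and the $\max$ align with the first and second slots of $\psi$ and are not accidentally transposed between the $\tzero$- and $\tone$-branches.
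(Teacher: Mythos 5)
Your proposal is correct and matches the paper's own argument, which likewise obtains Proposition~\ref{prop:tauf} by combining the bijection lemma for $\star$ with the lemma computing $\hat{\pi}(\sigma^{\tzero}\star\sigma^{\tone})$ and $\hat{\gamma}_{\recht{m}}(\sigma^{\tzero}\star\sigma^{\tone})$, exactly the pointwise identity $\me\circ\star=\psi\circ(\me^{f/\tzero}\times\me^{f/\tone})$ you verify. Your explicit remark that $\Psi_{\recht{F}}$ must be read as acting on the metric images of the strategy sets is the right resolution of the paper's notational shorthand.
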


We now turn our attention to the case that the minimal element of $F \cup A$ is $a \in A$. In this case, we do not combine strategies $\sigma^{\tzero}$ of $\Lambda^{a/\tzero}$ and $\sigma^{\tone}$ of $\Lambda^{a/\tone}$, but rather lift them separately: we get a strategy $\iota(\sigma^{\tzero})$ by not performing $a$ and then following $\sigma^{\tzero}$, and a strategy $\iota(\sigma^{\tone})$ by performing $a$ and then following $\sigma^{\tzero}$. The following result shows that all strategies of $\Lambda$ are obtained in this way. Note that as sets, $\Sigma_{\Lambda^{a/\tzero}}$ and $\Sigma_{\Lambda^{a/\tone}}$ are equal, since $\Lambda^{a/\tzero}$ and $\Lambda^{a/\tone}$ only differ in their Boolean functions. We therefore use their disjoint union, denoted by $\sqcup$, to ensure that their elements are treated as distinct.

\begin{lemma}
Let $\Lambda = (F,A,\varphi,Q)$ be a scenario such that there exists an $a \in A$ that is minimal in $F \cup A$ w.r.t. $\prec_Q$. Then there exists a bijection $\iota \colon \Sigma_{\Lambda^{a/\tzero}} \sqcup\Sigma_{\Lambda^{a/\tone}} \rightarrow \Sigma_{\Lambda}$ defined, for all $a' \in A$ by
\begin{align*}
\iota(\sigma)_{a'} &= \begin{cases}
\sigma_{a'},& \textrm{ if $a' \neq a$,}\\
\tzero,& \textrm{ if $a' = a$ and $\sigma \in \Sigma_{\Lambda^{a/\tzero}}$,}\\
\tone,& \textrm{ if $a' = a$ and $\sigma \in \Sigma_{\Lambda^{a/\tone}}$.}
\end{cases}
\end{align*}
\end{lemma}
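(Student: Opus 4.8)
The plan is to establish bijectivity by exhibiting an explicit two-sided inverse, exactly as was done for the map $\star$ in the preceding lemma. The crucial observation is that since $a$ is minimal in $F \cup A$ with respect to $\prec_Q$, we have $Q_a = \varnothing$, so that $\BB^{Q_a} = \BB^{\varnothing}$ is a one-element set. Consequently the $a$-component $\sigma_a$ of any strategy $\sigma \in \Sigma_{\Lambda}$ is not a genuine function but is pinned down by a single bit, namely its value on the unique empty tuple; I will freely identify $\sigma_a$ with that element of $\BB$. This is precisely what makes $a$ special: deciding whether to perform $a$ requires no prior observation, so the decision collapses to a constant in $\BB$.

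The inverse $\iota^{-1}\colon \Sigma_{\Lambda} \to \Sigma_{\Lambda^{a/\tzero}} \sqcup \Sigma_{\Lambda^{a/\tone}}$ is defined as follows. Given $\sigma \in \Sigma_{\Lambda}$, I restrict it to its components indexed by $A \setminus \{a\}$; since $Q^{a/\tzero}_{a'} = Q^{a/\tone}_{a'} = Q_{a'}$ for every $a' \neq a$, this restriction $(\sigma_{a'})_{a' \neq a}$ lies simultaneously in the underlying set of both $\Sigma_{\Lambda^{a/\tzero}}$ and $\Sigma_{\Lambda^{a/\tone}}$. I then place it in the copy selected by the constant $\sigma_a$: send $\sigma$ to the restriction tagged into $\Sigma_{\Lambda^{a/\tzero}}$ if $\sigma_a = \tzero$, and into $\Sigma_{\Lambda^{a/\tone}}$ if $\sigma_a = \tone$. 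The disjoint union in the statement is exactly what records this tag, and it is needed because $\Sigma_{\Lambda^{a/\tzero}}$ and $\Sigma_{\Lambda^{a/\tone}}$ coincide as sets: they differ only in their Boolean functions, not in their strategy spaces.

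Verifying that $\iota$ and $\iota^{-1}$ are mutually inverse is then a direct unwinding of definitions. Starting from a tagged element of the disjoint union, $\iota$ reinstates the $a$-component with value $\tzero$ or $\tone$ according to the tag and leaves the other components untouched; applying $\iota^{-1}$ reads off exactly that value of $\sigma_a$ and returns the original restriction with the original tag. Conversely, starting from $\sigma \in \Sigma_{\Lambda}$, $\iota^{-1}$ records $\sigma_a$ together with the remaining components, and $\iota$ reconstructs a strategy whose $a$-component equals $\sigma_a$ (using the one-element domain of $\sigma_a$) and whose other components agree with those of $\sigma$. I do not anticipate a genuine obstacle here: the only point requiring care is the bookkeeping of the disjoint union, namely keeping the tag $\tzero/\tone$ distinct from the shared underlying strategy, so that $\iota$ is well-defined on both summands and no information is lost when passing back and forth.
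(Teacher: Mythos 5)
Your proof is correct and matches the paper's: both establish bijectivity by exhibiting the same explicit inverse $\sigma \mapsto \sigma|_{A\setminus\{a\}}$ tagged into $\Sigma_{\Lambda^{a/\tzero}}$ or $\Sigma_{\Lambda^{a/\tone}}$ according to the constant value $\sigma_a$ (constant because $Q_a = \varnothing$). You simply spell out the mutual-inverse verification that the paper leaves implicit.
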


\begin{proof}
The inverse is given by
\[
\sigma \mapsto \begin{cases}
(\sigma)|_{A \setminus \{a\}} \in \Sigma_{\Lambda_{a/\tzero}}, & \textrm{ if $\sigma_a = \tzero$,}\\
(\sigma)|_{A \setminus \{a\}} \in \Sigma_{\Lambda_{a/\tone}}, & \textrm{ if $\sigma_a = \tone$.}
\end{cases}
\]
\end{proof}

As before, we can express the probability and cost of strategies of $\Lambda$ in terms of those of strategies of $\Lambda^{a/\tzero}$ and $\Lambda^{a/\tone}$.

\begin{lemma}
For $\sigma^{\tzero} \in \Sigma_{\Lambda^{a/\tzero}}, \sigma^{\tone} \in \Sigma_{\Lambda^{a/\tone}}$ one has
\begin{align*}
\hat{\pi}(\iota(\sigma^{\tzero})) &= \hat{\pi}^{a/\tzero}(\sigma^{\tzero}),&
\hat{\pi}(\iota(\sigma^{\tone})) &= \hat{\pi}^{a/\tone}(\sigma^{\tone}),\\
\hat{\gamma}_{\recht{m}}(\iota(\sigma^{\tzero})) &= \hat{\gamma}^{a/\tzero}_{\recht{m}}(\sigma^{\tzero}),&
\hat{\gamma}_{\recht{m}}(\iota(\sigma^{\tone})) &= \hat{\gamma}^{a/\tone}_{\recht{m}}(\sigma^{\tone})+\gamma_a.
\end{align*}
\end{lemma}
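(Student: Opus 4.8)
The plan is to unwind the definitions of $\iota$, $\hat{\pi}$, and $\hat{\gamma}_{\recht{m}}$ and reduce all four identities to the single observation that freezing the $a$-coordinate of $\varphi$ to a constant produces exactly $\varphi^{a/\tzero}$ or $\varphi^{a/\tone}$. The starting point is that $a$ is minimal w.r.t. $\prec_Q$, which by definition means $Q_a = \varnothing$, so $\sigma_a$ is a map $\BB^{\varnothing} \rightarrow \BB$, i.e.\ a constant that does not depend on the observed failures. By the definition of $\iota$ from the previous lemma, $\iota(\sigma^{\tzero})$ fixes this constant to $\tzero$ and $\iota(\sigma^{\tone})$ fixes it to $\tone$, leaving every other component intact. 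Hence the induced map satisfies $\widehat{\iota(\sigma^{\tzero})}(x)_a = \tzero$ and $\widehat{\iota(\sigma^{\tzero})}(x)_{a'} = \hat{\sigma}^{\tzero}(x)_{a'}$ for all $a' \neq a$ and all $x$, and symmetrically for $\iota(\sigma^{\tone})$. The first point I would stress is that $F$, $\mathcal{X}$, and $\pi$ are identical in $\hat{\Lambda}$, $\hat{\Lambda}^{a/\tzero}$, and $\hat{\Lambda}^{a/\tone}$, so the random variable $\mathsf{x}$ and its distribution are literally the same object in all three scenarios; this is what lets me compare probabilities across them directly.

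For the probability identities, I would substitute the above description of $\widehat{\iota(\sigma^{\tzero})}$ into $\varphi$. Because the $a$-argument is frozen to $\tzero$, one gets $\varphi(x, \widehat{\iota(\sigma^{\tzero})}(x)) = \varphi[y_a/\tzero](x, \hat{\sigma}^{\tzero}(x)) = \varphi^{a/\tzero}(x, \hat{\sigma}^{\tzero}(x))$ for every $x$, and likewise with $\tone$ for $\iota(\sigma^{\tone})$. Applying $\prob_{\mathsf{x}}(\,\cdot = \tone)$ to both sides, and using that the law of $\mathsf{x}$ agrees across scenarios, yields $\hat{\pi}(\iota(\sigma^{\tzero})) = \hat{\pi}^{a/\tzero}(\sigma^{\tzero})$ and $\hat{\pi}(\iota(\sigma^{\tone})) = \hat{\pi}^{a/\tone}(\sigma^{\tone})$.

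For the cost identities I would write $\hat{\gamma}_{\recht{m}}(\sigma) = \max_{x \in \mathcal{X}} \sum_{a' \in A} \gamma_{a'} \hat{\sigma}(x)_{a'}$ and split off the $a$-term. For $\iota(\sigma^{\tzero})$ that term is $\gamma_a \cdot \tzero = 0$, so the sum effectively ranges over $A \setminus \{a\}$ and equals the quantity defining $\hat{\gamma}_{\recht{m}}^{a/\tzero}(\sigma^{\tzero})$. For $\iota(\sigma^{\tone})$ the $a$-term equals $\gamma_a$ for every $x$; since this contribution is independent of $x$, it factors out of the maximum, giving $\gamma_a + \max_{x} \sum_{a' \neq a} \gamma_{a'} \hat{\sigma}^{\tone}(x)_{a'} = \gamma_a + \hat{\gamma}_{\recht{m}}^{a/\tone}(\sigma^{\tone})$. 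The whole lemma is essentially bookkeeping with no genuinely hard step, but the one place I would state explicitly rather than gloss over is precisely this factoring: $\gamma_a$ leaves the maximum only because the decision on $a$ is constant over all observed failure vectors, a direct consequence of $Q_a = \varnothing$. Were $a$ not minimal, this decoupling would fail, which is exactly why the recursion must process $a$ at the root of the (sub-)BDD.
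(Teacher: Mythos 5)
Your proof is correct and follows essentially the same route as the paper's: describe $\widehat{\iota(\sigma^{\tzero/\tone})}$ explicitly (constant $a$-coordinate, all other coordinates unchanged), substitute the constant into $\varphi$ to recognize $\varphi^{a/\tzero}$ or $\varphi^{a/\tone}$ for the probability claim, and split the constant $\gamma_a$ term out of the maximum for the cost claim. The only difference is presentational: the paper writes out only the $\sigma^{\tone}$ case and declares the other analogous, whereas you treat both and add the (correct) remark that the factoring of $\gamma_a$ out of the maximum hinges on $Q_a=\varnothing$.
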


\begin{proof}
We prove this for $\sigma_{\tone}$ only, as the other case is analogous. One has $\widehat{\iota(\sigma_{\tone})}_a = \tone$ and $\widehat{\iota(\sigma_{\tone})}_{a'} = (\hat{\sigma}_{\tone})_{a'}$ for $a' \neq a$. Hence
\begin{align*}
\hat{\pi}(\iota(\sigma^{\tone})) &= \prob(\varphi(\mathsf{x},\widehat{\iota(\sigma_{\tone})}(\mathsf{x})) = \tone)& \hat{\gamma}_{\recht{m}}(\iota(\sigma^{\tone})) &= \max_{x \in X} \sum_{a' \in A} \gamma_{a'} \widehat{\iota(\sigma^{\tone})}(x)_{a'} \\
&= \prob(\varphi^{a/\tone}(\mathsf{x},\hat{\sigma}^{\tone}(\mathsf{x})) = \tone) &&= \gamma_a + \max_{x \in X} \sum_{a' \neq a} \gamma_{a'} \hat{\sigma}^{\tone}(x)_{a'} \\
&= \hat{\pi}^{a/\tone}(\sigma^{\tone}) &&= \gamma_a + \hat{\gamma}_{\recht{m}}^{a/\tone}(\sigma^{\tone}).
\end{align*}
\end{proof}

Combining these two lemmas leads to the following result, where $\Psi_{\recht{A}}$ is as in Section \ref{ssec:commute}:

\begin{proposition} \label{prop:taua}
    Let $hat{\Lambda}$ be a quantified scenario such that there exists an $a \in A$ that is minimal in $F \cup A$ w.r.t. $\prec_Q$. Then
    \[
    \tau_{\recht{m}}(\Sigma_{\Lambda}) = \Psi_{\recht{A}}(\Sigma_{\Lambda^{a/\tzero}},\Sigma_{\Lambda^{a/\tone}},\gamma_a).
    \]
    \end{proposition}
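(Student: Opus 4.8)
The plan is to obtain Proposition~\ref{prop:taua} as an immediate consequence of the two lemmas that directly precede it, in exactly the same way that Proposition~\ref{prop:tauf} follows from its two lemmas in the failure case. Throughout I read the right-hand side $\Psi_{\recht{A}}(\Sigma_{\Lambda^{a/\tzero}},\Sigma_{\Lambda^{a/\tone}},\gamma_a)$ with the metric map $\tau_{\recht{m}}$ applied to each strategy set, so that both sides live in subsets of $\mathcal{D}$ and the identity to establish is
\[
\tau_{\recht{m}}(\Sigma_{\Lambda}) = \Psi_{\recht{A}}\bigl(\tau_{\recht{m}}(\Sigma_{\Lambda^{a/\tzero}}),\tau_{\recht{m}}(\Sigma_{\Lambda^{a/\tone}}),\gamma_a\bigr).
\]

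First I would invoke the bijection $\iota\colon \Sigma_{\Lambda^{a/\tzero}} \sqcup \Sigma_{\Lambda^{a/\tone}} \to \Sigma_{\Lambda}$ from the first of the two lemmas. Since $\iota$ is a bijection, $\Sigma_{\Lambda}$ is the union of the images $\iota(\Sigma_{\Lambda^{a/\tzero}})$ and $\iota(\Sigma_{\Lambda^{a/\tone}})$, whence $\tau_{\recht{m}}(\Sigma_{\Lambda}) = \tau_{\recht{m}}(\iota(\Sigma_{\Lambda^{a/\tzero}})) \cup \tau_{\recht{m}}(\iota(\Sigma_{\Lambda^{a/\tone}}))$. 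Next I would substitute the cost/probability identities of the second lemma. On the $\tzero$-branch one has $\tau_{\recht{m}}(\iota(\sigma^{\tzero})) = \tau^{a/\tzero}_{\recht{m}}(\sigma^{\tzero})$ (both coordinates unchanged), so $\tau_{\recht{m}}(\iota(\Sigma_{\Lambda^{a/\tzero}})) = \tau_{\recht{m}}(\Sigma_{\Lambda^{a/\tzero}})$. On the $\tone$-branch one has $\tau_{\recht{m}}(\iota(\sigma^{\tone})) = \tau^{a/\tone}_{\recht{m}}(\sigma^{\tone}) + \vvec{0\\\gamma_a}$, so its image is the set obtained from $\tau_{\recht{m}}(\Sigma_{\Lambda^{a/\tone}})$ by adding $\gamma_a$ to every second (cost) coordinate. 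Taking the union of these two sets reproduces verbatim the definition of $\Psi_{\recht{A}}$ given in Section~\ref{ssec:commute}, which finishes the proof.

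The computation is routine, so the only point that genuinely requires care is matching the built-in asymmetry of $\Psi_{\recht{A}}$ with that of the lemma. Unlike $\Psi_{\recht{F}}$, the operator $\Psi_{\recht{A}}(I_1,I_2,c)$ leaves $I_1$ untouched and shifts only $I_2$ by $c$; I must therefore verify that $I_1$ corresponds to the $\tzero$-branch (not activating $a$, no added cost) and $I_2$ to the $\tone$-branch (activating $a$, incurring the additional $\gamma_a$), which is precisely the direction in which the second lemma records the cost. A secondary point worth checking is that the resulting identity is an honest set equality rather than a mere inclusion: this holds because $\iota$ is a bijection onto all of $\Sigma_{\Lambda}$, so no strategy of $\Lambda$ is omitted and each strategy is attributed to exactly one reduced scenario, whence no contribution is counted twice or with the wrong branch.
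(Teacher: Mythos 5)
Your proof is correct and is exactly the argument the paper intends: the paper presents Proposition~\ref{prop:taua} as an immediate consequence of the bijection lemma for $\iota$ and the cost/probability lemma, which is precisely the combination you spell out (including the correct reading of the right-hand side as $\Psi_{\recht{A}}$ applied to the $\tau_{\recht{m}}$-images of the strategy sets). Your remark about matching the asymmetry of $\Psi_{\recht{A}}$ to the $\tzero$/$\tone$ branches is a worthwhile sanity check but does not constitute a departure from the paper's route.
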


\subsection{Theorem \ref{thm:mainmax} for FOBDDs} \label{ssec:fobdd}

In this section, we generalize the notion of ROBDDs to ordered binary decision diagram (OBDD). There are multiple OBDDs representing a single Boolean functions, with the ROBDD being the minimal one. The maximal one is called the full OBDD (FOBDD); in this section we prove Theorem \ref{thm:mainmax} for FOBDDs. In the next section, we prove that the reduction steps to get from the FOBDD to the ROBDD do not affect the result, thus proving Theorem \ref{thm:mainmax}. We start with the definition of (F)OBDD.

\begin{definition}
Let $C$ be a finite set, and let $<$ be a strict linear order on $C$. An \emph{ordered binary decision diagram} (OBDD) on $(C,<)$ is a tuple $B = (V,E,t_V,t_E)$ where:
\begin{enumerate}
\item $(V,E)$ is a rooted directed acyclic graph.
\item $t_V\colon V \rightarrow C \cup \BB$ satisfies $t_V(v) \in \BB$ iff $v$ is a leaf.
\item $t_E\colon E \rightarrow \{\tzero,\tone\}$ is such that the two outgoing edges of each nonleaf $v$ have different values; the two children are denoted $c_{\tzero}(v)$ and $c_{\tone}(v)$ depending on the edge values.
\item If $(v,v') \in E$, then $t_V(v) < t_V(v')$.
\end{enumerate}
An OBDD is called \emph{full} (FOBDD) if it is a tree and each leaf has depth $|C|$. 
\end{definition}

An (F)OBDD represents a Boolean function in the same manner as an ROBDD, though the representation is no longer unique. The following result shows that all OBDDs representing the same function are related through reduction steps:

\begin{proposition} \emph{\cite{bryant1986graph}} \label{prop:bdd}
Let $C$ be a finite set, let $<$ be a linear order on $C$, and let $\varphi$ be a Boolean function $\varphi\colon \BB^C \rightarrow \BB$.
\begin{enumerate}
    \item There is a unique FOBDD on $(C,<)$ representing $\varphi$.
    \item Any OBDD on $(C,<)$ can be obtained from the FOBDD through repeated application of the following steps:
    \begin{enumerate}
        \item Merge two leafs with the same label;
        \item Merge two nonleafs $v,v'$ such that $t_V(v) = t_V(v')$, $c_{\tzero}(v) = c_{\tzero}(v')$ and $c_{\tone}(v) = c_{\tone}(v')$;
        \item Remove a nonleaf $v$ with $c_{\tzero}(v) = c_{\tone}(v)$ by rerouting all incoming edges to its unique child.
    \end{enumerate}
    \item Conversely, any OBDD obtained using these steps represents $\varphi$.
    \item When these steps can no longer be applied, the result is the ROBDD representing $\varphi$.
\end{enumerate}
\end{proposition}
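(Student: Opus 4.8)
The plan is to handle the four claims in order of difficulty, using throughout that an OBDD evaluates an input $z \in \BB^C$ by tracing the unique root-to-leaf path that follows the values $z_{t_V(v)}$. For claim (1), in a full OBDD every nonleaf has exactly two children and all leaves lie at depth $|C|$, so the underlying graph is forced to be the complete binary tree of depth $|C|$. Since labels strictly increase along every root-to-leaf path and each such path has exactly $|C|$ internal nodes, that path tests all of $C$ in the order $<$, hence every node at depth $i$ carries the $(i+1)$-th variable of $(C,<)$. Thus the shape and the internal labels are determined by $<$ alone, each leaf corresponds to a unique total assignment, and representing $\varphi$ forces its label to be the value of $\varphi$ on that assignment; this gives both existence and uniqueness. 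For claim (3) I would check that each of the three steps leaves the traced path's final label unchanged for every input: merging equally-labelled leaves or isomorphic nonleaves reroutes a path to an identically-labelled node with the same continuation, and deleting a node with $c_{\tzero}(v)=c_{\tone}(v)$ merely skips a test whose outcome was irrelevant.

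The substantive claim is (2), which I would prove by induction on $|C|$, the base case $|C|=0$ being a single leaf. For the step, let $c$ be the $<$-minimal variable and split on the root $r$ of the target OBDD $B$. If $t_V(r)=c$, the sub-OBDDs rooted at $c_{\tzero}(r)$ and $c_{\tone}(r)$ represent $\varphi[x_c/\tzero]$ and $\varphi[x_c/\tone]$ on $C\setminus\{c\}$; the FOBDD of $\varphi$ has root testing $c$ with its two disjoint subtrees being exactly the FOBDDs of these restrictions, so by the induction hypothesis each subtree reduces, independently and locally (disjoint subgraphs do not interfere), to a private copy of the corresponding part of $B$. If instead $t_V(r)\neq c$, then $c$ is tested nowhere in $B$, since labels strictly increase along edges and $c$ is $<$-minimal; hence $\varphi$ is independent of $c$, the two subtrees of the FOBDD root are identical, and after merging them and then deleting the now-redundant root via step (c) we are left with the FOBDD of $\varphi$ viewed on $C\setminus\{c\}$, to which the induction hypothesis applies.

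The main obstacle is the first case: in $B$ the two sub-OBDDs may share nodes, whereas the two subtrees of the FOBDD are disjoint, so after reducing them we hold two private copies of every shared node. I would close this gap with a bottom-up pass that merges, via steps (a)/(b), each pair of copies of a shared node; processing leaves first guarantees that whenever two internal copies are merged their children have already been identified, so the precondition $c_{\tzero}(v)=c_{\tzero}(v')$ and $c_{\tone}(v)=c_{\tone}(v')$ of step (b) is met. This produces exactly $B$ and completes claim (2).

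Finally, for claim (4) I would observe that no step applies precisely when no two leaves share a label, no two internal nodes share both label and children, and no node satisfies $c_{\tzero}(v)=c_{\tone}(v)$ — these are the reducedness conditions. To identify this normal form as \emph{the} ROBDD I would invoke canonicity: labelling each node by the Boolean sub-function it computes, reducedness makes this labelling injective (two leaves, or two nonleaves testing the same variable, with the same sub-function would be mergeable) and forces each node to test the least variable on which its sub-function depends (otherwise it would be redundant). The reduced diagram is therefore determined up to isomorphism by $\varphi$ and $<$, which is exactly the assertion that the terminal form is the ROBDD representing $\varphi$.
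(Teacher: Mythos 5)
The paper does not prove this proposition at all: it is imported verbatim from Bryant's 1986 paper, with only the citation standing in for a proof. So there is no in-paper argument to compare against; what you have written is a self-contained reconstruction of the classical result, and it is essentially correct. Your treatment of (1) and (3) is the standard one. For (2), the induction on $|C|$ with a case split on whether the root of the target OBDD tests the $<$-minimal variable $c$ is the right decomposition, and you correctly identify and close the one real gap: the two cofactor sub-OBDDs of the target may share nodes while the two FOBDD subtrees are disjoint, so after reducing each subtree you must re-identify the duplicated shared nodes bottom-up so that the precondition of the merge step (equal labels and already-identified children) holds at each stage. Two points are worth making explicit if you write this up: first, the induction hypothesis must be phrased over the variable set $C\setminus\{c\}$, and you should note that the sub-OBDDs of the target rooted at the root's children are indeed OBDDs over the remaining variables (every node reachable from them carries a label $>c$); second, in the case $t_V(r)\neq c$ the claim that no node of $B$ tests $c$ follows because any non-root node testing $c$ would have a parent with a strictly smaller label, which is impossible. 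For (4), your canonicity argument (label nodes by the subfunctions they compute; irreducibility forces this labelling to be injective and forces out redundant tests, so the terminal form is determined by $\varphi$ and $<$) is the standard proof; note only that the paper's stated definition of ROBDD omits the no-redundant-test clause $c_{\tzero}(v)\neq c_{\tone}(v)$, so your terminal object satisfies a slightly stronger condition than the paper's Definition literally requires --- a defect of the paper's definition rather than of your argument.
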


As with ROBDDs, we can speak of FOBDDs representing scenarios. The following result shows that the reduced scenarios of Section \ref{ssec:subdiv} can be found in the FOBDD. It is a straightforward consequence of the definition of the Boolean function associated to an FOBDD.

\begin{lemma} \label{lem:fobdd}
Let $\Lambda$ be a scenario, and let $B$ be an FOBDD representing $(\Lambda,<)$, corresponding to a strict linear order $<$ on $F \cup A$. Suppose that $\R{B}$ is not a leaf. Let $B^{\tzero}$, $B^{\tone}$ be the sub-FOBDDs of $B$ with root $c_{\tzero}(\R{B})$ and $c_{\tone}(\R{B})$, respectively. Then $B^{\tzero}$ represents $\Lambda^{t_V(\R{B})/\tzero}$ and $B^{\tone}$ represents $\Lambda^{t_V(\R{B})/\tone}$, respectively.
\end{lemma}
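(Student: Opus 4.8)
The plan is to verify the two requirements in the definition of ``represents a scenario'' for each of $B^{\tzero}$ and $B^{\tone}$: that each represents the correct reduced Boolean function, and that the restricted variable order linearizes the reduced partial order. Since representing a scenario depends only on $\varphi$ and the variable order (not on $\pi,\gamma$), it suffices to treat $\Lambda^{c/\tzero}$ and $\Lambda^{c/\tone}$ at the level of $(F,A,\varphi,Q)$.

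First I would set up the minimality bookkeeping. Because $B$ is a full OBDD, every root-to-leaf path lists all variables once in increasing $<$-order, so the root carries the $<$-least label; write $c := t_V(\R{B})$. As $<$ extends $\prec_Q$, no variable can lie $\prec_Q$-below $c$, hence $c$ is also $\prec_Q$-minimal, which is exactly the hypothesis needed for $\Lambda^{c/\tzero}$ and $\Lambda^{c/\tone}$ to be defined. Moreover $B^{\tzero}$ and $B^{\tone}$ are trees all of whose leaves have depth $|F \cup A| - 1 = |(F \cup A) \setminus \{c\}|$, and the edge-order condition is inherited, so each is a full OBDD on $((F \cup A) \setminus \{c\}, <)$ with $<$ restricted.

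Next I would handle the Boolean functions, which is the ``straightforward'' part. By the evaluation semantics of OBDDs, computing $\varphi$ on an input $z$ with $z_c = \tzero$ first takes the edge $\R{B} \to c_{\tzero}(\R{B})$ and then evaluates inside $B^{\tzero}$ on $z$ restricted to the remaining variables; hence $B^{\tzero}$ represents the function obtained by fixing the $c$-argument of $\varphi$ to $\tzero$, which is precisely $\varphi^{c/\tzero}$. The case $z_c = \tone$ and $B^{\tone}$ is identical.

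The remaining and most delicate step is to show that the restriction of $<$ to $(F \cup A) \setminus \{c\}$ linearizes $\prec_{Q^{c/\tzero}}$. Since restricting a linear extension yields a linear extension of the restricted relation, it is enough to prove that $\prec_{Q^{c/\tzero}}$ coincides with $\prec_Q$ restricted to $(F \cup A) \setminus \{c\}$; as $Q^{c/\tzero} = Q^{c/\tone}$, the same identity settles $B^{\tone}$ too. I would prove this by checking each of the four defining clauses of $\prec$ in the two situations $c = f \in F$ and $c = a \in A$. When $c = f$, one uses $Q^{f/\tzero}_{a} = Q_a \setminus \{f\}$ together with the minimality fact $f \in Q_a$ for all $a$, so deleting $f$ from every $Q_a$ alters no strict inclusion $Q_a \subset Q_{a'}$ and no membership $f' \in Q_a$ among the surviving variables. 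When $c = a$, one uses $Q^{a/\tzero}_{a'} = Q_{a'}$ for $a' \neq a$ together with $Q_a = \varnothing$, so that $a$ can never be the witnessing attack in the clause $f \prec_Q f'$. The main obstacle is exactly this bookkeeping: one must confirm that no relation is gained or lost when $c$ is removed, and the two minimality hypotheses ($f$ lies in every $Q_a$, respectively $Q_a = \varnothing$) are precisely what makes each clause stable under the deletion.
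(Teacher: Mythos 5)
Your proof is correct and follows the route the paper intends: the paper offers no written proof of this lemma at all, asserting only that it is ``a straightforward consequence of the definition of the Boolean function associated to an FOBDD,'' and your argument is precisely the natural expansion of that claim (root label is $<$-least, hence $\prec_Q$-minimal, hence the reduced scenarios are defined; the subtrees compute the cofactors $\varphi[z_c/\tzero]$ and $\varphi[z_c/\tone]$; the restricted order remains a linearization). The one genuinely non-trivial point you verify --- that $\prec_{Q^{c/\tzero}}$ coincides with the restriction of $\prec_Q$ to $(F\cup A)\setminus\{c\}$, using in each case the minimality hypothesis ($f \in Q_a$ for all $a$, respectively $Q_a = \varnothing$) to show no relation is gained or lost --- is exactly the detail the paper glosses over, and your case analysis there is sound.
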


This result, together with the results of Sections \ref{ssec:commute} and \ref{ssec:subdiv}, allows us to prove Theorem \ref{thm:mainmax} for FOBDDs.

\begin{theorem} \label{thm:fullmax}
Let $\hat{\Lambda}$ be a quantified scenario. Let $B = (V,E,t_V,t_E)$ be an FOBDD representing $\Lambda$. For $v \in V$, define sets $G_v,\tilde{G}_v \subseteq \mathcal{D}$ as in Theorem \ref{thm:mainmax}. Then $G_{\R{B}} = \recht{PMC}(\hat{\Lambda})$.
\end{theorem}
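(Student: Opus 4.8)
The plan is to prove, by induction on the sub-FOBDDs of $B$ from the leaves upward, the slightly stronger statement that $G_v = \PF(\tau_{\recht{m}}(\Sigma_{\Lambda_v}))$ for \emph{every} node $v$, where $\Lambda_v$ is the scenario represented by the sub-FOBDD $B_v$ rooted at $v$ (and $\tau_{\recht{m}}(\Sigma_{\Lambda_v})$ denotes the image $\{\tau_{\recht{m}}(\sigma)\mid\sigma\in\Sigma_{\Lambda_v}\}$). Applying this at $v=\R{B}$ gives $G_{\R{B}}=\PF(\tau_{\recht{m}}(\Sigma_{\Lambda}))=\recht{PMC}(\hat{\Lambda})$, which is the theorem. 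The induction is well-founded because, $B$ being full, every root-to-leaf path has length $|F\cup A|$ and visits each variable exactly once in $<$-increasing order; hence the sub-FOBDD $B_v$ at depth $d$ is itself an FOBDD on the top $|F\cup A|-d$ variables, and I can induct on this remaining height.

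For the base case, $v$ is a leaf with $t_V(v)\in\BB$. Then $\Lambda_v$ has $F=A=\varnothing$ and constant structure function $\varphi\equiv t_V(v)$, so it admits a single (empty) strategy $\sigma$ with $\hat{\pi}(\sigma)=t_V(v)$ and $\hat{\gamma}_{\recht{m}}(\sigma)=0$ (the sum over $A=\varnothing$ being $0$). Thus $\tau_{\recht{m}}(\Sigma_{\Lambda_v})$ is the singleton $\left\{\vvec{0\\0}\right\}$ or $\left\{\vvec{1\\0}\right\}$, which is its own Pareto front and coincides with $G_v=\tilde{G}_v$ as defined.

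For the inductive step, let $v$ be an internal node. Since $B$ represents $\Lambda$, its order $<$ is a linearization of $\prec_Q$, and $t_V(v)$ is the $<$-minimal variable occurring in $B_v$; as the minimum of a linear extension is always a minimal element of the partial order, $t_V(v)$ is $\prec_Q$-minimal in $\Lambda_v$. Hence the reduced scenarios $\Lambda_v^{t_V(v)/\tzero}$ and $\Lambda_v^{t_V(v)/\tone}$ are defined, and by Lemma~\ref{lem:fobdd} they are represented by the sub-FOBDDs rooted at $c_{\tzero}(v)$ and $c_{\tone}(v)$, so the induction hypothesis reads $\PF(\tau_{\recht{m}}(\Sigma_{\Lambda_v^{t_V(v)/b}}))=G_{c_b(v)}$ for $b\in\BB$. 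If $t_V(v)=f\in F$, Proposition~\ref{prop:tauf} gives $\tau_{\recht{m}}(\Sigma_{\Lambda_v})=\Psi_{\recht{F}}(\tau_{\recht{m}}(\Sigma_{\Lambda_v^{f/\tzero}}),\tau_{\recht{m}}(\Sigma_{\Lambda_v^{f/\tone}}),\pi_f)$; applying $\PF$, using Proposition~\ref{prop:commutef} twice to replace both arguments by their Pareto fronts, and then invoking the induction hypothesis yields $\PF(\tau_{\recht{m}}(\Sigma_{\Lambda_v}))=\PF(\Psi_{\recht{F}}(G_{c_{\tzero}(v)},G_{c_{\tone}(v)},\pi_f))=\PF(\tilde{G}_v)=G_v$. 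The case $t_V(v)=a\in A$ is identical, using Propositions~\ref{prop:taua} and~\ref{prop:commutea} and the parameter $\gamma_a$ in place of $\pi_f$.

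All the genuine computation is delegated to the earlier propositions, so little remains to grind once the induction is set up. The main obstacle is precisely the bookkeeping that links the recursion on the BDD to the recursion on scenarios: one must verify that the root label $t_V(v)$ really is $\prec_Q$-minimal in $\Lambda_v$ (so that the reduced scenarios exist and Propositions~\ref{prop:tauf}/\ref{prop:taua} are applicable) and that Lemma~\ref{lem:fobdd} correctly identifies the two child sub-FOBDDs with those reduced scenarios (so that the induction hypothesis transfers to them). Once these identifications are in place, the commutation Propositions~\ref{prop:commutef} and~\ref{prop:commutea} are exactly the tool needed to interchange $\PF$ with the bottom-up set operations, closing the induction.
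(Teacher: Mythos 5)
Your proof is correct and follows essentially the same route as the paper's: an induction that reduces the root of the FOBDD to the two child sub-FOBDDs via Lemma~\ref{lem:fobdd}, then combines Propositions~\ref{prop:tauf}/\ref{prop:taua} with the commutation Propositions~\ref{prop:commutef}/\ref{prop:commutea}. The only cosmetic difference is that you phrase it as structural induction on sub-FOBDDs of a fixed $B$ while the paper inducts on $|F\cup A|$, and you make explicit the (correct) observation that the $<$-minimal label is $\prec_Q$-minimal, which the paper leaves implicit.
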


\begin{proof}
We prove this by induction on $|F \cup A|$. If $F \cup A = \varnothing$, then either $\varphi = \tzero$ or $\varphi = \tone$, and this is the label of $\R{B}$. Furthermore, $\Sigma_{\Lambda} = \{*\}$, and for this strategy one has $\hat{\gamma}_{\recht{m}}(*) = 0$, and $\hat{\pi}(*)$ is equal to $0$ or $1$ depending on the value of $t_V(\R{B})$. In the first case, we have
\[
G_{\R{B}} = \tilde{G}_{\R{B}} = \left\{\vvec{0\\0}\right\} = \left\{\vvec{\hat{\pi}(\hat{\gamma}_{\recht{m}}(*))\\0}\right\} = \recht{PMC}(\hat{\Lambda});
\]
the case $\hat{\pi}(*) = 1$ is analogous. This proves the base case.

Now suppose $|F \cup A| > 0$, and suppose the statement is true for every quantified scenario $\hat{\Lambda}'$ with $|F' \cup A'| < |F \cup A|$. Since $|F \cup A| > 0$, we know that $\R{B}$ is not a leaf; assume $t_V(\R{B}) = f \in F$. Then in the notation of Lemma \ref{lem:fobdd}, we now that $B^{\tzero}$ represents $\Lambda^{f/\tzero}$. Thus by the induction hypothesis we know
\[
G_{c_{\tzero}(\R{B})} = \recht{PMC}(\hat{\Lambda}^{f/\tzero}) = \PF(\Sigma_{\Lambda^{f/\tzero}}),
\]
and likewise
\[
    G_{c_{\tone}(\R{B})} = \recht{PMC}(\hat{\Lambda}^{f/\tone}) = \PF(\Sigma_{\Lambda^{f/\tone}}).
\]
Then using Propositions \ref{prop:commutef} and \ref{prop:tauf} we get
\begin{align*}
G_{\R{B}} &= \PF(\Psi_{\recht{F}}(G_{c_{\tzero}(\R{B})},G_{c_{\tone}(\R{B})},\pi_f)) \\
&= \PF(\Psi_{\recht{F}}(\PF(\Sigma_{\Lambda^{f/\tzero}}),\PF(\Sigma_{\Lambda^{f/\tzero}}),\pi_f)) \\
&= \PF(\Psi_{\recht{F}}(\Sigma_{\Lambda^{f/\tzero}},\Sigma_{\Lambda^{f/\tzero}},\pi_f)) \\
&= \PF(\Sigma_{\Lambda}).
\end{align*}
The case that $t_V(\R{B}) \in A$ is similar, except we now use Propositions \ref{prop:commutea} and \ref{prop:taua}.
\end{proof}

\subsection{Reduction to ROBDDs} \label{ssec:reduce}

In the previous section we have proven Theorem \ref{thm:mainmax} for FOBDDs. In this section, we show that this result does not change when applying the reduction steps of Proposition \ref{prop:bdd}, allowing us to conclude Theorem \ref{thm:mainmax}. The meat of this section is in the following result.

\begin{lemma} \label{lem:reduce}
Let $B$ be an ordered BDD representing a quantified scenario $\hat{\Lambda}$. Let $B'$ be obtained from $B$ via one of the operations from Proposition \ref{prop:bdd}.
Then $G_{\R{B}} = G_{\R{B'}}$.
\end{lemma}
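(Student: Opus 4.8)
The plan is to prove Lemma~\ref{lem:reduce} by analyzing each of the three reduction operations from Proposition~\ref{prop:bdd} separately, showing that none of them alters the value $G_{\R{B}}$. The key observation driving the whole argument is that the quantities $G_v$ and $\tilde{G}_v$ are defined \emph{recursively} from the children of $v$, so each $G_v$ depends only on the sub-BDD rooted at $v$. Consequently, if an operation is applied somewhere in $B$, the only $G_v$ that can possibly change are those at nodes $v$ lying on a path from $\R{B}$ to the site of the operation. Thus it suffices to show that the operation preserves the $G$-value \emph{at the node where it is applied} (or at its parents), since the recursion then propagates this equality unchanged up to the root. I would state this propagation principle explicitly at the start, reducing the problem to a purely local check for each of the three operations.

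First I would dispatch operations (a) and (b), the merging of two leaves with the same label and the merging of two isomorphic nonleaves. In both cases the merged nodes have, by hypothesis, identical labels and identical children, hence identical defining data in the recursion for $\tilde{G}$ and $G$. Therefore the two nodes already satisfy $G_{v} = G_{v'}$ before merging, and any parent referencing one rather than the other computes the same $\tilde{G}$; merging them changes nothing in the recursion. This is essentially immediate from the fact that $\tilde{G}_v$ is a function only of $t_V(v)$, $G_{c_{\tzero}(v)}$, and $G_{c_{\tone}(v)}$.

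The substantive case is operation (c): removing a nonleaf $v$ with $c_{\tzero}(v) = c_{\tone}(v) =: w$ by rerouting its incoming edges directly to $w$. Here I must show that for every parent $u$ of $v$, replacing the child $v$ by $w$ leaves $G_u$ unchanged; equivalently, it suffices to show $G_v = G_w$ so that the recursion at $u$ sees the same child-value. The crux is to verify, for both node types, that applying the $\tilde{G}$-construction to a node both of whose children equal $w$ returns (after taking $\PF$) the same set as $G_w$. If $t_V(v) = f \in F$, then $\tilde{G}_v = \{(\pi_f p_{\tzero} + (1-\pi_f)p_{\tone}, \max(c_{\tzero},c_{\tone})) \mid (p_{\tzero},c_{\tzero}),(p_{\tone},c_{\tone}) \in G_w\}$; the diagonal terms (taking the two inputs equal) reproduce $G_w$ exactly, and I would argue the off-diagonal terms are never strictly better, so $\PF(\tilde{G}_v) = \PF(G_w) = G_w$. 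If $t_V(v) = a \in A$, then $\tilde{G}_v = G_w \cup \{(p,c+\gamma_a) \mid (p,c) \in G_w\}$, and since $\gamma_a \ge 0$ each shifted point is dominated by its unshifted counterpart, so again $\PF(\tilde{G}_v) = G_w$. This domination step is where I expect the main obstacle to lie: I must be careful that the $\max$ operation in the $F$-case and the cost shift in the $A$-case genuinely never produce a new Pareto-optimal point. For the $F$-case in particular, one should check that for mixed inputs $d_{\tzero} \ne d_{\tone}$ from $G_w$, the resulting point $\psi(d_{\tzero},d_{\tone},\pi_f)$ is dominated by (or equal to) one of the diagonal points $\psi(d_{\tzero},d_{\tzero},\pi_f)$ or $\psi(d_{\tone},d_{\tone},\pi_f)$; this follows from monotonicity of $\psi$ together with the fact that $\max(c_{\tzero},c_{\tone})$ equals one of the two costs, so the resulting pair is $\sqsupseteq$ one of the two diagonal endpoints. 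Once $G_v = G_w$ is established, the propagation principle finishes the proof, and the analogous PEC statement follows by the same reasoning with $\SCPF$ in place of $\PF$.
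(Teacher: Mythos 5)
Your proposal is correct and follows essentially the same route as the paper: a local-propagation principle reduces everything to showing the operation preserves the $G$-value at the affected node, cases (a) and (b) are immediate because $\tilde{G}_v$ depends only on $t_V(v)$ and the children's $G$-sets, and case (c) is handled by showing $G_v = G_w$ via domination of the off-diagonal ($F$-case) and cost-shifted ($A$-case) points. One small sharpening for the $F$-case: monotonicity of $\psi$ alone does not give the domination, since the two inputs $d_{\tzero}, d_{\tone} \in G_w$ are in general incomparable; you need precisely that incomparability (both being Pareto-optimal forces the input with the larger cost to also have the larger probability), so that the mixed point, having cost $\max(c_{\tzero},c_{\tone})$ and probability at most the larger of $p_{\tzero},p_{\tone}$, is dominated by that same diagonal endpoint --- which is exactly the observation the paper's proof makes explicit.
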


\begin{proof}
In the second case, the computation of $G_v$ only depends on $G_{c_{\tzero}(v)}, G_{c_{\tone}(v)}$, and $t_V(v)$. Since these are all shared with $v'$, we get $G_v = G_{v'}$. Therefore, merging them does not affect the bottom-up computation. The first case is analogous.

Now consider the third case. Let $v' = c_{\tzero}(v) = c_{\tone}(v)$; we will show that $G_{v} = G_{v'}$. If this is true, then the bottom-up computation will result in the same result after removing $v$. First suppose that $t_V(v) = f \in F$. Then 
\[
G_v  = \PF\left\{\vvec{\pi_f p_{\tzero}+(1-\pi_f )p_{\tone}\\\max(c_{\tzero},c_{\tone})} \ \middle| \ \vvec{p_{\tzero}\\c_{\tzero}},\vvec{p_{\tone}\\c_{\tone}} \in G_{v'}\right\}.
\]
Let $\vvec{p_{\tzero}\\c_{\tzero}},\vvec{p_{\tone}\\c_{\tone}} \in G_{v'}$ with $p_{\tzero} \leq p_{\tone}$. Since both these vectors are Pareto optimal, they are not comparable under $\sqsubset$, so $c_{\tzero} \leq c_{\tone}$. Hence
\[
\vvec{\pi_f p_{\tzero}+(1-\pi_f )p_{\tone}\\\max(c_{\tzero},c_{\tone})} = \vvec{\pi_f p_{\tzero}+(1-\pi_f )p_{\tone}\\c_{\tone}} \sqsupseteq \vvec{p_{\tone}\\c_{\tone}}.
\]
This shows that for the Pareto front, we only need to consider the case $\vvec{p_{\tzero}\\c_{\tzero}} = \vvec{p_{\tone}\\c_{\tone}}$; but this yields precisely the elements of $G_{v'}$. We conclude that $G_v = G_{v'}$.

Now suppose that $t_V(v) = a \in A$, and let $d = \binom{0}{\gamma_a} \in \mathcal{D}$. Then
\[
G_v = \PF\left(G_{v'} \cup (G_{v'}+d)\right).
\]
If $d' \in G_{v'}$, then $d'+d \sqsupseteq d$, so the terms of $G_{v'}+d$ are not Pareto optimal. This shows that $G_v = G_{v'}$, which concludes our proof.
\end{proof}

\begin{proof}[Proof of Theorem \ref{thm:mainmax}]
By Proposition \label{prop:bdd}, we obtain the ROBDD of $\hat{\Lambda}$ from its FOBDD via reduction steps. By Theorem \ref{thm:fullmax}, the bottom-up algorithm correctly computes $\recht{PMC}(\hat{\Lambda})$, and by Lemma \ref{lem:reduce}, these reduction steps do not change the result of the computation.
\end{proof}

\end{document}